    \documentclass[10pt,journal,compsoc]{IEEEtran}

    \usepackage[nocompress]{cite}
    \usepackage{amsmath,amsfonts}
    \usepackage{algorithmic}
    \usepackage{algorithm}
    \usepackage{array}
    \usepackage{booktabs}
    \usepackage[caption=false,font=normalsize,labelfont=sf,textfont=sf]{subfig}
    \usepackage{textcomp}
    \usepackage{stfloats}
    \usepackage{url}
    \usepackage{verbatim}
    \usepackage{graphicx}
    \usepackage{enumitem}
    \usepackage{amssymb}

    \usepackage{tikz}
    \usetikzlibrary{positioning, fit, calc, shapes.geometric, arrows.meta, backgrounds, patterns}

    \newtheorem{lemma}{Lemma}[section]
    \newtheorem{corollary}{Corollary}[section]
    \newtheorem{remark}{Remark}[section]

    \ifCLASSOPTIONcompsoc
    \else
    \usepackage{cite}
    \fi

    \ifCLASSINFOpdf
    \else
    \fi

\begin{document}
    %
    % Paper title
    \title{LipCache: A Local Inference Proxy with Certified Caching for Edge Image Classification Service}
    \author{Zhengzhe~Xiang,
        Yinlin~Chen,
        Fuli~Ying,
        Binbin~Zhou,
        Hailiang~Zhao~\IEEEmembership{Member,~IEEE},
	    Schahram~Dustdar~\IEEEmembership{Fellow,~IEEE}
        \thanks{This work was partially supported by the National Key R\&D Program of China (No.~2025YFG0100700) and the Supercomputing Center of Hangzhou City University. Additional support was provided by Grant CNS2023-144359, funded by MICIU/AEI/10.130391101100011033, and by the European Union NextGeneration EU/PRTR.}
        
        \IEEEcompsocitemizethanks{
            \IEEEcompsocthanksitem Z.~Xiang, Y.~Chen, F.~Ying, B.~Zhou are with Hangzhou City University, Hangzhou, China (e-mail: \{xiangzz@hzcu, chenyinlin@stu.hzcu, yingfuli@stu.hzcu, bbzhou@hzcu\}.edu.cn).
            \IEEEcompsocthanksitem H.~Zhao is with Zhejiang University, Hangzhou, China (e-mail:hliangzhao@zju.edu.cn)
            \IEEEcompsocthanksitem S.~Dustdar is with ICREA, Barcelona, Spain, and the Distributed Systems Group, TU Wien, Vienna, Austria (e-mail: dustdar@dsg.tuwien.ac.at).
            \IEEEcompsocthanksitem Copyright (c) 2026 IEEE. Personal use of this material is permitted. Permission from IEEE must be obtained for any other use, by sending a request to pubs-permissions@ieee.org.
        }
}

    \IEEEtitleabstractindextext{%
    \begin{abstract}
    As edge-side vision services continue to expand toward low-latency, high-throughput scenarios, reducing the inference cost of vision models without sacrificing reliability has become a central concern. Existing semantic caching methods largely rely on empirical similarity thresholds; while such thresholds improve hit rates, they tend to introduce silent misclassifications near decision boundaries. To address this, we propose \texttt{LipCache}, a certified semantic caching framework for image classification. Without modifying the existing deployed main model, \texttt{MainNet}, the framework introduces a lightweight network, \texttt{GuardNet}, that maps inputs into a low-dimensional feature space subject to a Lipschitz constraint. It then computes a per-sample certified reuse radius from the local classification margin and the spectral norm of the classification head. At runtime, a cached result is reused only when the query feature falls inside the certified reuse ball; otherwise, the query falls back to \texttt{MainNet}. Thus, cache hits are transformed from empirical threshold tests into geometric certification decisions with explicit theoretical boundaries. Across standard image classification tasks like CIFAR, Tiny-ImageNet, and SVHN, \texttt{LipCache} achieves a measured speedup of up to $1.65\times$ with limited end-to-end accuracy degradation, while all accepted cache hits satisfy the \texttt{GuardNet}-side certified-consistency condition. Furthermore, an enhanced \texttt{GuardNet} training recipe substantially improves cache hit rates in the Tiny-ImageNet multi-class extension while maintaining a certified-consistency rate of $100\%$. These results demonstrate that per-sample certified reuse can reduce main-model fallback while preserving theoretical consistency, providing a feasible approach to reliable cache-assisted inference at the edge.
\end{abstract}

    % Note: Peer-reviewed papers typically do not use keywords.
    \begin{IEEEkeywords}
    Edge intelligence, image classification, semantic caching
    \end{IEEEkeywords}
}
    \maketitle

    \IEEEdisplaynontitleabstractindextext
    \IEEEpeerreviewmaketitle

    \IEEEraisesectionheading{\section{Introduction}}
    \label{sec:introduction}

    \IEEEPARstart{D}{eep}-neural-network-based vision services have been widely deployed across diverse domains such as autonomous-driving perception, intelligent manufacturing inspection, real-time surveillance, and mobile augmented reality. The dual demand of these services for inference accuracy and response speed has made high-accuracy neural models a de facto standard component. However, the parameter count and computational complexity of such models keep growing, so that a single inference already consumes considerable compute and memory resources.

    This tension is further amplified in industrial edge scenarios. Unlike occasional photo recognition on a phone, industrial edge devices---such as production-line cameras and roadside perception units---typically operate continuously at high sampling rates, injecting a large volume of concurrent queries into the inference service within very short intervals. When these queries converge at the cloud, they readily trigger spikes in queries-per-second, network congestion, and server overload, severely threatening the overall service availability and tail latency. In other words, the root cause of the problem lies not only in the high cost of a single inference, but more critically in the fact that a large number of semantically redundant queries are repeatedly executed without discrimination.

    The rise of the edge computing paradigm offers a structural opportunity to mitigate this risk. By pushing part of the computation or storage capability to edge nodes close to the data source, the system can potentially absorb a fraction of the query load locally, thereby reducing its dependence on the remote cloud link. New challenges arise, however: the latency and memory budgets of edge hardware are extremely tight and far from sufficient to host a full-sized model comparable to its cloud counterpart \cite{xu2018scaling,shuvo2023edgeaccel,liu2024lightweight,wang2025edgeintelligence}. Directly deploying a high-accuracy classifier on an edge device almost inevitably faces a sharp conflict between accuracy and resources.

    To address this tension, existing work proceeds along roughly two lines. The first line aims to reduce the execution cost of the model at the edge, including compact architecture design, model compression, and early-exit strategies \cite{mobilenetv2_2018,shufflenet_2018,branchynet_2016}. These methods substantially improve edge deployability, yet their unit of optimization remains ``how to run the model more cheaply''; they do not touch upon a more fundamental question: for the many semantically repetitive queries, is it really necessary to run the model every time?

    The \textit{cache-assisted inference }starts precisely from this gap. It exploits the pervasive redundancy of visual patterns in edge workloads and reuses previously computed results to skip part of the model invocations directly \cite{DBLP:conf/acsos/OgdenGWG21,noscope2017,deepcache2018,li2021semanticmemory,deltacnn2022}. However, current reuse strategies suffer from a structural weakness in reliability: exact matching is extremely sensitive to compression, cropping, and illumination changes, and can hardly enable effective reuse in real deployments; semantic caching, although it enlarges the reuse region, mostly relies on heuristic similarity thresholds and provides no formal guarantee on the correctness of the reused labels. Consequently, cache hits near decision boundaries may introduce silent misclassifications and uncontrollable accuracy loss. A caching strategy that combines an explainable safety boundary with practical reuse efficiency is still missing.

    This paper proposes \texttt{LipCache}, a certified semantic caching framework that comprehensively exploits the storage and computation capability of edge devices. Its core design principle is modularity: rather than modifying the deployed high-accuracy main model (\texttt{MainNet}), it introduces a lightweight guard network, \texttt{GuardNet}, as a new branch to construct the semantic cache. Through operator-norm-controlled convolutions and spectral normalization, \texttt{GuardNet} maps inputs to a smooth low-dimensional feature space \cite{sedghi2019singular,miyato2018spectral}, and, starting from the local classification margin and the Lipschitz bound of the classification head, computes a per-sample \textit{certified reuse radius} offline for each cached sample. Online, only when the query feature falls inside the certified ball of some cached sample, namely when their distance is less than the certified reuse radius, does the system reuse the stored label; otherwise it falls back to \texttt{MainNet} to preserve accuracy. Thereby, a cache hit is converted from an empirical threshold decision into a geometric certification decision with an explicit theoretical boundary.

    We validate the core certification mechanism of \texttt{LipCache} on benchmarks with markedly different statistical structure: CIFAR (natural images), Tiny-ImageNet (higher-resolution fine-grained inter-class differences), and SVHN (domain-shifted digits and characters). This choice is designed to isolate the stability of the certified radius under different task geometries, rather than to exhaust all visual task scales; larger-scale settings (e.g., full ImageNet) and non-i.i.d. streaming scenarios are left as open directions discussed in Section~\ref{sec:conclusion}. On these three tasks, the average hit rates under the tight certified radius reach $0.342$, $0.472$, and $0.502$, respectively, with end-to-end accuracies of $0.921$, $0.867$, and $0.960$, and corresponding measured speedups of $1.32\times$, $1.31\times$, and $1.65\times$; moreover, the certified-consistency rate of all results is $100\%$. Comparisons with empirical thresholds and the global nearest-neighbor baseline show that the certified radius is the only reuse boundary that can stably preserve certified consistency across the three tasks. Furthermore, in the Tiny-ImageNet multi-class extension, an enhanced \texttt{GuardNet} training recipe improves the hit rates of the $20/30/50$-class tasks from $0.056/0.005/0.0004$ to $0.423/0.254/0.124$, while both the end-to-end accuracy and the certified consistency are preserved. The main contributions are as follows:
    \begin{enumerate}
        \item We propose a dual-model cache-assisted inference framework that comprehensively exploits the storage and computation capability of edge devices and, without modifying \texttt{MainNet}, makes certified reuse decisions through a lightweight \texttt{GuardNet}.
        \item Starting from the local classification margin and the Lipschitz bound of the \texttt{GuardNet} classification head, we derive a per-sample certified reuse radius in the feature space, providing a computable and interpretable rule for safe semantic reuse.
        \item We systematically evaluate \texttt{LipCache} across multiple benchmark tasks, multi-class extensions, and edge-deployment boundaries, verifying the effectiveness of the certified radius as the only cross-task stable certified boundary and revealing the potential of the enhanced training recipe in enlarging the certified reuse region.
    \end{enumerate}

    \section{Related Work}
    \label{sec:related_work}

    Edge image classification faces a structural tension: high-accuracy neural models keep growing in their computational and memory demands, whereas edge devices must satisfy tight latency, energy, storage, and connectivity constraints. A practical edge system therefore requires more than just a smaller classifier. It must reduce computation, exploit redundancy among correlated inputs, and decide when a low-cost shortcut is safe rather than merely fast. The literature addresses these needs along three main lines and one important bridge. Efficient edge inference reduces the cost of running a model through compact architectures, adaptive execution, hardware--software co-design, or edge--cloud partitioning. Cache-assisted inference and semantic reuse avoid part of the execution by reusing results from temporally or semantically correlated inputs. Certified nearest-neighbor retrieval forms a bridge, studying when a nearest-neighbor decision in an embedding space remains invariant under perturbation. Lipschitz control and certification provide the tools that turn margins and continuity bounds into local stability guarantees. \texttt{LipCache} connects these lines: it uses a lightweight \texttt{GuardNet} as a cache-oriented front end to the high-accuracy \texttt{MainNet}, and accepts reuse only inside a per-sample certified reuse radius in the \texttt{GuardNet} feature space.

    \subsection{Inference for Edge Vision Systems}
    \label{subsec:edge_vision}

    The dominant strategy for edge inference is to reduce the cost per invocation, or to place computation where it can be executed more efficiently. Survey work points out the underlying pressure: the scale and computational complexity of DNNs grow faster than the performance and efficiency gains that embedded platforms can provide \cite{xu2018scaling}. Recent surveys organize the response around efficient architectures, model compression, hardware acceleration, algorithm--hardware co-design, and deployment techniques for on-device or edge--cloud execution \cite{shuvo2023edgeaccel,liu2024lightweight,wang2025edgeintelligence,sun2025aiedgedeploy}. Edge--cloud collaboration further partitions a DNN so that shallow feature extraction runs on the device while deeper layers run on the cloud or an edge server, additionally attending to privacy, communication overhead, and resource allocation \cite{zhang2025edgecloudsurvey,fang2026mae}.

    Representative architecture-level methods reduce the cost of the backbone itself. MobileNetV2 uses inverted residual blocks, thin linear bottlenecks, and depthwise separable convolutions to improve mobile classification, detection, and segmentation across model sizes \cite{mobilenetv2_2018}. MobileNetV3 combines hardware-aware neural architecture search with NetAdapt and manual architectural refinements, yielding both large and small mobile models tuned for phone CPUs \cite{mobilenetv3_2019}. ShuffleNet targets extremely low computational budgets by combining pointwise grouped convolutions with channel shuffling, lowering cost while preserving mobile accuracy \cite{shufflenet_2018}. Representative execution-level methods adaptively reduce the average cost: BranchyNet adds side-branch classifiers so that high-confidence samples can exit early, while harder samples continue through deeper layers \cite{branchynet_2016}; MSDNet treats test-time computation as an anytime or budgeted classification problem and reuses multi-scale dense features across multiple early exits \cite{msdnet_2018}. At the collaborative-system level, MAE reduces intermediate-transmission and scheduling costs by activating sparse channel-level experts during DNN inference on edge devices \cite{fang2026mae}, and more recent edge-cloud co-inference combines spike-driven compression with dynamic early-exit to cut both transmission volume and end-to-end latency \cite{DBLP:conf/icmla/HassanDZZA25}.

    These works substantially improve deployability, yet their unit of optimization remains executing the current query---through the model, an early-exit path, or a device--cloud partition. They do not explicitly exploit a per-sample certified reuse region around cached samples to decide whether the current query can entirely skip the main classifier. \texttt{LipCache} is therefore complementary: it keeps \texttt{MainNet} unchanged as a reliable fallback predictor and adds an independent \texttt{GuardNet}-guided cache decision layer to avoid selected \texttt{MainNet} invocations.

    \subsection{Cache-Assisted Neural Network Inference}
    \label{subsec:cache_assisted}

    Cache-assisted inference follows a different strategy: rather than reducing the cost per invocation, it tries to avoid invocations by reusing outputs or intermediate states of correlated inputs. Exact-match memoization is too brittle for vision tasks, since repeated content may differ due to camera motion, illumination, scale, cropping, or semantic changes. Practical systems therefore extend reuse beyond equality by exploiting temporal locality, video-frame coherence, low-cost filters, intermediate feature maps, or semantic memory.

    Shadow Puppets proposes a semantic caching service that breaks the binary choice between pure cloud inference and full edge execution, leveraging edge-side caching to approximate cloud-level accuracy at lower latency and cost \cite{shadowpuppets2018}. DeepCache targets continuous mobile vision: it uses heuristics inspired by video compression to detect reusable regions in the video input, propagates these reusable regions through the layers of the CNN, works with unmodified models, and reports reductions in inference time and energy \cite{deepcache2018}. NoScope specializes video analytics to a specific video and object, searching over a cascade of cheap specialized models and difference detectors to mimic a reference network at a chosen accuracy level \cite{noscope2017}. DeltaCNN likewise exploits video coherence but at the implementation level: it propagates sparse inter-frame differences through typical CNN layers and accelerates per-frame inference without retraining \cite{deltacnn2022}. Semantic Memory turns reuse toward semantic features, encoding high-dimensional feature maps into low-dimensional semantic vectors, and uses hierarchical memory and adaptive cache management to accelerate mobile CNN inference with acceptable accuracy loss \cite{li2021semanticmemory}. More recent work continues to expand cache-assisted inference along several axes: synergistic lazy loading and layer-wise caching reduce the cold-start and loading cost of serverless edge models \cite{DBLP:conf/infocom/Huang0GY25}; task-aware expert caching decomposes multitask models into modular experts for fine-grained reuse \cite{DBLP:journals/iotj/SinthiaMSH25}; and distributed inference caching has been formalized as a submodular maximization problem under storage constraints \cite{DBLP:journals/tmc/ChenCH26a}. The closest line on certified retrieval is RetrievalGuard, which studies provably robust 1-nearest-neighbor image retrieval. Given a base retrieval model and a query, RetrievalGuard constructs a smoothed retrieval model, analyzes the 1-NN search process in the high-dimensional embedding space, and derives a computable $\ell_2$ radius within which the Recall@1 result remains unchanged \cite{wu2022retrievalguard}.

    These systems demonstrate that reuse can be a powerful source of acceleration, yet their acceptance criteria are mainly empirical or workload-specific. Video-based methods rely on temporal continuity or sparse inter-frame differences; specialized cascades rely on a fixed camera/query distribution; semantic memory relies on learned or measured feature similarity and exit policies; certified retrieval certifies the stability of the nearest-neighbor result itself, rather than the safe reuse of a cached payload in an edge inference pipeline. \texttt{LipCache} retains the idea of reuse but changes the acceptance criterion: the cache is built in a dedicated \texttt{GuardNet} feature space, and a reuse is accepted only when the query falls inside the per-sample certified reuse radius of some cached sample, a radius derived from the \texttt{GuardNet}-side margin and Lipschitz bounds.

    \subsection{Lipschitz Control and Certification}
    \label{subsec:lipschitz_control}

    Lipschitz control and certification address the safety problem of efficient inference. Their common strategy is to bound how fast a neural mapping can change, and then use this bound, together with margins or probabilistic arguments, to derive a region within which the prediction remains stable. This body of literature therefore provides the mathematical tools for turning similarity into local consistency statements.

    At the layer and architecture level, spectral normalization constrains the operator norm of the weights through a lightweight normalization procedure, originally introduced to stabilize GAN discriminators \cite{miyato2018spectral}. For convolutional layers, Sedghi et al.\ characterize the singular values of standard multi-channel convolutions, enabling efficient computation and projection onto operator-norm balls \cite{sedghi2019singular}. Araujo et al.\ provide a unified algebraic view of 1-Lipschitz neural networks, showing that several layers based on orthogonality, spectra, and convex potentials can be understood through a common semidefinite-programming condition \cite{DBLP:conf/iclr/AraujoHDA023}. These works explain how to build or constrain networks whose continuity can be controlled.

    At the certification level, CROWN bounds general activation functions with linear or quadratic surrogates to produce certified lower bounds on adversarial distortion \cite{crown2018}; randomized smoothing, in turn, turns a classifier that remains accurate under Gaussian noise into an $L_2$-certified smoothed classifier, extending certification to full-resolution ImageNet \cite{cohen2019smoothing}. More recent work tightens or extends Lipschitz-based certification. Huang et al.\ exploit activation states to remove inactive rows and columns from the induced-norm computation, thereby computing efficient local Lipschitz upper bounds \cite{DBLP:conf/nips/HuangZSKA21}. Fazlyab et al.\ use directional Lipschitz bounds and improved regularization to maximize a differentiable lower bound on the distance to the decision boundary \cite{fazlyab2023dynamicmargin}. Hu et al.\ show that achieving certifiable robustness under a Lipschitz constraint requires careful capacity and data design, including larger Lipschitz-controlled residual dense layers and data augmentation \cite{hu2024recipe}. Wang et al.\ improve scalability by reformulating semidefinite-programming Lipschitz estimation as an eigenvalue optimization problem \cite{wang2024lipscalability}; and ECLipsE decomposes the large verification problem into layer-scale subproblems or closed-form relaxations to obtain faster compositional estimates \cite{xu2024eclipse}.

    From the perspective of cache-assisted edge inference, the common limitation of these methods is that they certify model robustness or stability, rather than cache reuse. Their certificates typically answer whether a classifier's prediction remains invariant under perturbation; they do not decide whether a cached payload should replace an expensive \texttt{MainNet} invocation. \texttt{LipCache} repurposes the same mathematical ingredients for a system-level reuse decision: the \texttt{GuardNet} margin and the \texttt{GuardNet}-side Lipschitz control define a per-sample certified reuse radius in the \texttt{GuardNet} feature space. The resulting guarantee is local to \texttt{GuardNet}, whereas consistency with \texttt{MainNet} is treated as an empirical system property.

    In summary, existing work leaves three gaps with respect to reliable edge caching. Edge-inference methods reduce the cost of running a model but do not address whether the model can be skipped. Semantic caching and video reuse exploit redundancy but typically rely on empirical thresholds or workload locality. Certified-retrieval and robustness methods provide stability guarantees, but they do not formulate the cache-hit acceptance criterion as a \texttt{GuardNet}-side reuse certificate coupled with \texttt{MainNet} fallback. \texttt{LipCache} fills exactly this intersection by combining a dedicated \texttt{GuardNet} feature space, a per-sample certified reuse radius for each cached sample, and an empirical evaluation of the consistency between the accepted cache hits and \texttt{MainNet}.

    \section{Problem Description}
    \label{sec:problem}

    \subsection{System Setting and Cache Motivation}
    \label{subsec:task_definition}

    This section provides a formal model of cache-assisted inference in an edge image classification system, establishing a unified notational foundation for the framework design in Section~\ref{sec:methodology} and the certification analysis in Section~\ref{sec:theory}.

    Consider an image classification system whose input space is $\mathcal{X} \subseteq \mathbb{R}^{n}$ and whose label set is $\mathcal{Y}=\{1,\dots,C\}$. The system may invoke a high-accuracy classifier $M:\mathcal{X}\rightarrow\mathcal{Y}$, but a single inference of $M$ is costly---in edge scenarios this may amount to a combination of GPU time, cloud communication latency, and other overhead. For any input $x\in\mathcal{X}$, the system ultimately needs to output a prediction $\hat{y}\in\mathcal{Y}$.

    A key observation is that edge vision workloads are not composed of mutually independent random samples. The input--label pairs $(x,y)$ are drawn from a joint distribution $\mathcal{D}$ that often exhibits pronounced structural redundancy: influenced by factors such as a fixed camera viewpoint, periodic sampling, and slow scene changes, consecutively arriving queries are highly clustered in pixel space or semantic space, and adjacent inputs frequently share the same label. Under these conditions, executing $M$ in full for every query means that a large amount of computation is wasted on samples whose answers could have been inferred from existing results. This observation points to a natural optimization path: cache a subset of historical predictions, and when a new query is sufficiently similar to some cached sample, directly reuse its label, thereby skipping the invocation of $M$.

    However, the above path faces a fundamental tension. If the reuse condition is too loose, queries near a decision boundary may be incorrectly matched to an inappropriate cached label, leading to silent accuracy loss; if the reuse condition is too strict, cache hits will be so rare that the savings in $M$ invocations become negligible. The core challenge is therefore not ``whether to introduce a cache,'' but how to define a reuse rule that simultaneously achieves a sufficient hit rate to yield meaningful system gains and constrains the resulting prediction errors to a controllable range.

    \subsection{Formalization of the Optimization Objective}
    \label{subsec:optimization}

    To precisely characterize the above trade-off, we abstract cache-assisted inference as an optimization problem over deployment policies.

    A deployment policy $\pi$ produces two outputs for each input $x\in\mathcal{X}$: a predicted label $\hat{y}(x)\in\mathcal{Y}$, and a binary routing decision $\rho(x)\in\{\mathrm{cache},\,\mathrm{invoke}\}$, the latter indicating whether the prediction comes from cache reuse or from a direct inference of $M$.

    Let $(x,y)\sim\mathcal{D}$ denote a test sample drawn from the data distribution. The system accuracy of policy $\pi$ is defined as
    \begin{equation}
    \label{eq:acc_sys}
    \mathrm{Acc}_{\mathrm{sys}}(\pi)
    \triangleq \Pr_{(x,y)\sim\mathcal{D}}\!\bigl[\hat{y}(x)=y\bigr],
    \end{equation}
    and the standalone accuracy of $M$ is
    $\mathrm{Acc}_{M}
    \triangleq \Pr_{(x,y)\sim\mathcal{D}}\!\bigl[M(x)=y\bigr]$.
    In practice, the above distributional expectations are approximated by empirical frequencies on a reference validation or test set.

    The invocation rate of policy $\pi$ on $M$ is defined as
    \begin{equation}
    \label{eq:invoke}
    \mathrm{InvokeRate}_{M}(\pi)
    \triangleq \Pr_{(x,y)\sim\mathcal{D}}\!\bigl[\rho(x)=\mathrm{invoke}\bigr].
    \end{equation}
    This quantity directly determines the degree of cost savings on $M$ inference---the lower the invocation rate, the higher the fraction of $M$ invocations that are skipped.

    Cache-assisted inference can thus be formalized as the following constrained optimization problem:
    \begin{equation}
    \label{eq:objective}
    \begin{aligned}
    \min_{\pi}\quad & \mathrm{InvokeRate}_{M}(\pi) \\
    \mathrm{s.t.}\quad & \mathrm{Acc}_{\mathrm{sys}}(\pi)
     \ge \mathrm{Acc}_{M} - \Delta_{\mathrm{acc}},
    \end{aligned}
    \end{equation}
    where $\Delta_{\mathrm{acc}}\ge 0$ is a user-specified accuracy tolerance. A smaller $\Delta_{\mathrm{acc}}$ enforces a more conservative reuse strategy, tending to depress the hit rate to reduce accuracy risk; a larger $\Delta_{\mathrm{acc}}$ allows more aggressive cache reuse, trading potential accuracy for a lower $M$ invocation frequency.

    %This optimization is a deployment-level goal statement rather than a continuous optimization procedure to be solved directly. The path taken in this paper is to construct a parametric policy family defined jointly by the \texttt{GuardNet} parameters, the cache construction method, the radius mode, and the return semantics, then to bound the safe reuse region within this family through theoretical certification (Section~\ref{sec:theory}) and to compare the hit-rate--accuracy--latency trade-offs of different operating points through experiments (Section~\ref{sec:experiments}). Section~\ref{sec:methodology} gives the concrete instantiation of this policy family.

    \section{The LipCache}
    \label{sec:methodology}

    Section~\ref{sec:problem} formalizes cache-assisted inference as a deployment-level optimization problem that minimizes the main-model invocation rate under an accuracy-tolerance constraint. This section gives a concrete instantiation of that problem---the \texttt{LipCache} framework. Its workflow proceeds in three stages: first, a Lipschitz-constrained lightweight guard network, \texttt{GuardNet}, is trained offline (Section~\ref{subsec:guardnet_training}); next, a feature cache is built on the trained \texttt{GuardNet}, and each cached sample is assigned a per-sample certified reuse radius (Section~\ref{subsec:cache_construction}); finally, in the online phase, \texttt{GuardNet} guides cache-hit detection and the fallback decision (Section~\ref{subsec:online_inference}). We first establish the unified notation.

    \subsection{Setup and Notation}
    \label{subsec:setup}

    \texttt{LipCache} adopts a dual-model design, comprising a high-accuracy main model $M$ and a lightweight guard model $G$. The main model provides the reference prediction:
    \begin{equation}
    \hat{y}_{M}(x)=\arg\max_{c\in\mathcal{Y}} M_c(x).
    \end{equation}
    The guard model maps each input to a low-dimensional feature space,
    $G:\mathcal{X}\rightarrow\mathbb{R}^{d}$. Let $z=G(x)$; an affine
    classifier operates in this space:
    \begin{equation}
    s(z)=Wz+b,
    \end{equation}
    where $W\in\mathbb{R}^{C\times d}$, $b\in\mathbb{R}^{C}$, yielding the
    \texttt{GuardNet} prediction
    \begin{equation}
    \hat{y}_{G}(x)=\arg\max_{c\in\mathcal{Y}} s_c(G(x)).
    \end{equation}
    We denote the full logit map by $F_G(x)=s(G(x))$ and the affine head by
    $f(z)=Wz+b$. When two logit vectors are subtracted, the bias term cancels,
    so it does not affect the Lipschitz constant.

    For each cached sample, the certified reuse radius is derived from the local classification margin and the Lipschitz bound of $f$
    (Section~\ref{sec:theory}). At runtime, a query that falls inside the certified reuse region of some cached sample reuses the stored \emph{payload}
    (the label returned upon a cache hit);
    otherwise the system falls back to $M$. Table~\ref{tab:symbols}~summarizes the notation used.

    \begin{table}[!t]
    \centering
    \caption{Main symbols used in the formulation.}
    \label{tab:symbols}
    \begin{tabular}{ll}
    \hline
    Symbol & Meaning \\
    \hline
    $\mathcal{X}, \mathcal{Y}$ & Input space and label set \\
    $M$ & High-accuracy main model (\texttt{MainNet}) \\
    $G$ & Feature map of the guard model (\texttt{GuardNet}) \\
    $W$ & Weight matrix of the \texttt{GuardNet} affine classifier \\
    $d$ & Dimensionality of the \texttt{GuardNet} feature space \\
    $L_G$ & Lipschitz constant of the \texttt{GuardNet} feature map $G$ \\
    $\sigma_{\max}(W)$ & Spectral norm of $W$ \\
    $z_i$ & \texttt{GuardNet} feature of cached sample $x_i$ \\
    $y_i^G$ & \texttt{GuardNet} prediction of $x_i$ \\
    $\tilde{y}_i$ & Cached payload associated with $x_i$ \\
    $r_i$ & Certified reuse radius of cached feature $z_i$ \\
    $z_{\mathrm{new}}$ & \texttt{GuardNet} feature of a new input \\
    $m_i$ & \texttt{GuardNet} classification margin of $x_i$ \\
    $\mathcal{C}$ & Feature cache maintained by the system \\
    \hline
    \end{tabular}
    \end{table}

    Each cache entry is a triple $(z_i,\tilde{y}_i,r_i)$, containing a \texttt{GuardNet}
    feature, a payload label, and a certified reuse radius.
    Figure~\ref{fig:guardnet_kd}~shows the offline training and feature-constraint pipeline,
    and Figure~\ref{fig:inference_flow}~shows the online inference.

    \begin{figure*}[!htbp]
    \centering
    \includegraphics[width=0.95\linewidth]{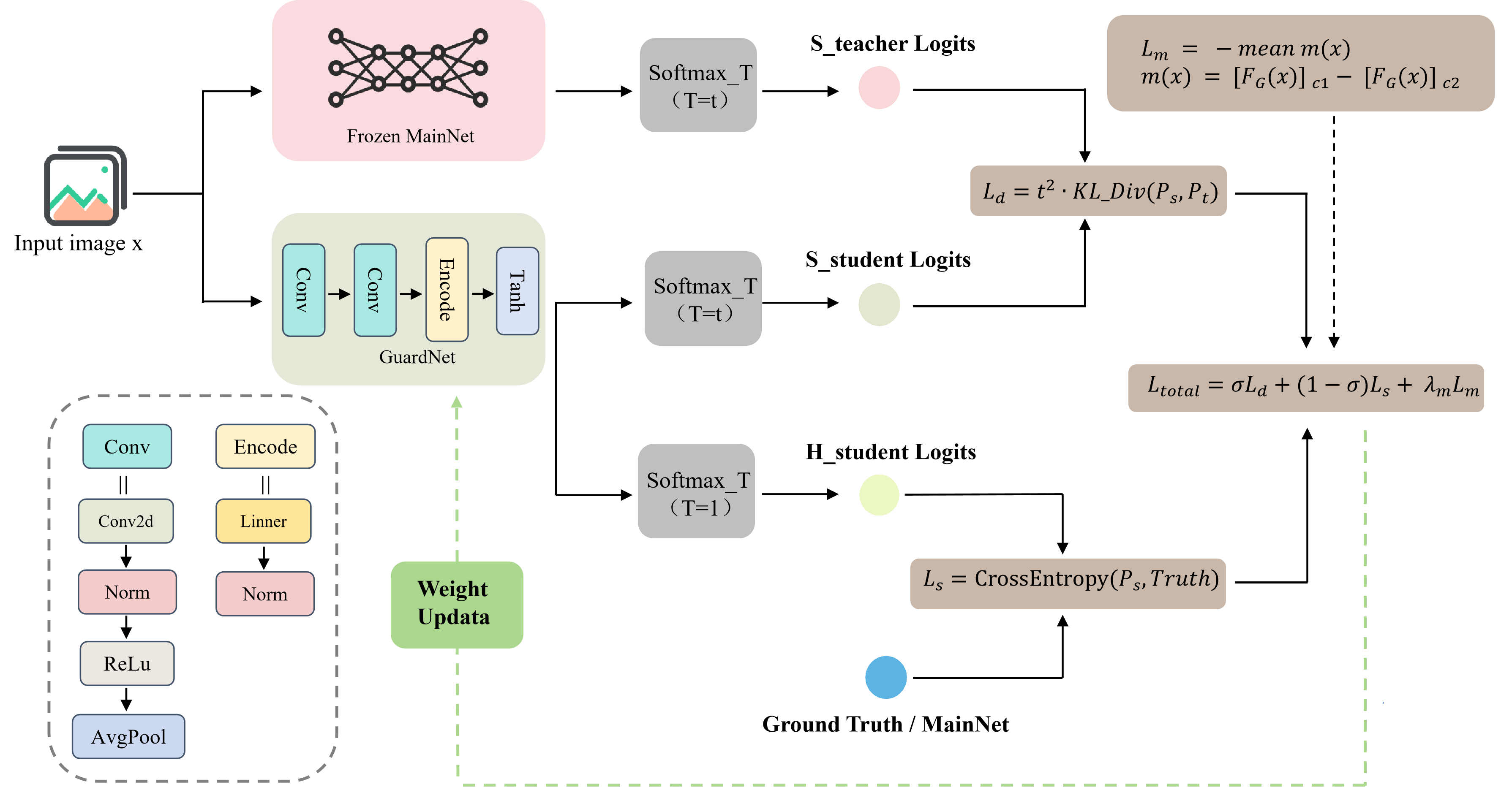}
    \caption{Offline training pipeline of \texttt{GuardNet}. Lipschitz control on the convolutions and spectral normalization on the linear layers are used to constrain the \texttt{GuardNet} feature map, thereby supporting the subsequent \texttt{GuardNet}-side certification; the main model may provide reference predictions when needed, but it is not a required training signal in the three 10-class main experiments of this paper.}
    \label{fig:guardnet_kd}
    \end{figure*}

    \subsection{Offline \texttt{GuardNet} Training}
    \label{subsec:guardnet_training}

    With the notation established, this subsection answers the first core question: how to train a \texttt{GuardNet} so that its learned feature geometry both supports discriminative classification and provides a sufficient certified reuse region for subsequent certified caching. To this end, the training objective is designed as a layered framework:
    \begin{equation}
    L_{\mathrm{total}} = (1-\alpha)L_{\mathrm{CE}} + \alpha L_{\mathrm{distill}}
     + \lambda_m L_{\mathrm{margin}},
    \end{equation}
    where $\alpha\in[0,1]$ balances direct supervision against an optional distillation signal from $M$, and $\lambda_m$ controls the margin-regularization strength. When $\alpha=0$, the objective reduces to the pure cross-entropy training used in the 10-class main experiments of this paper; when $\alpha>0$, it can be used for additional distillation ablations or teacher-alignment experiments. The above objective subsumes three complementary training signals, which we elaborate on one by one below, starting with the structural constraint that supports them.

    \textbf{Lipschitz control.}~
    The theoretical certificate in Section~\ref{sec:theory} requires the feature map to satisfy $\mathrm{Lip}(G)\le 1$. To this end, \texttt{GuardNet} imposes an operator-norm constraint on all parameterized layers. For a convolutional layer whose kernel is
    $K\!\in\!\mathbb{R}^{c_{\mathrm{out}}\times c_{\mathrm{in}}\times k_1\times k_2}$,
    the Lipschitz constant of the convolution operator is given by its frequency-domain representation:
    \begin{equation}
    \mathrm{Lip}(K)=\max_{\omega}\,\sigma_{\max}\!\big(\hat{K}(\omega)\big),
    \end{equation}
    where $\hat{K}(\omega)$ is the frequency-domain matrix obtained by zero-padding the kernel and applying a two-dimensional real FFT. Dividing the kernel by
    $\max(\mathrm{Lip}(K),1)$ before each forward pass enforces the 1-Lipschitz constraint~\cite{sedghi2019singular}. Linear layers use standard spectral normalization~\cite{miyato2018spectral}; ReLU, non-overlapping average pooling, and the final $\tanh$ projection are all non-expansive~\cite{DBLP:conf/iclr/AraujoHDA023}. Since the Lipschitz constant of each individual layer does not exceed 1, by submultiplicativity of operator norms under composition, the overall guard network satisfies $\mathrm{Lip}(G)\le 1$.

    \textbf{Distillation loss (optional).}~
    To optionally leverage the knowledge of $M$, one may minimize the KL divergence between the softened outputs of $M$ and $F_G$ at temperature $T$:
    \begin{equation}
    L_{\mathrm{distill}} = T^2\,\mathrm{KL}\!\bigl(
    \mathrm{Softmax}\!\bigl(\tfrac{M(x)}{T}\bigr)
    \,\big\|\,
    \mathrm{Softmax}\!\bigl(\tfrac{F_G(x)}{T}\bigr)\bigr).
    \end{equation}
    The standard cross-entropy $L_{\mathrm{CE}}$ with the ground-truth label $y$ preserves discriminative performance;
    the main experiments in this paper use $\alpha=0$, i.e., this term is disabled.

    \textbf{Margin regularization.}~
    The two terms above guarantee the discriminative accuracy of \texttt{GuardNet}, whereas margin regularization directly serves cache efficiency: the certified reuse radius $r_i$ is proportional to the \texttt{GuardNet} classification margin (see Section~\ref{sec:theory}), so explicitly encouraging large margins is a key lever for improving the hit rate. Let $c_1$ and $c_2$ denote the indices of the top-1 and top-2 logits:
    \begin{equation}
    m(x) = [F_G(x)]_{c_1} - [F_G(x)]_{c_2}.
    \end{equation}
    To avoid blindly enlarging the margin on already misclassified samples, the regularizer is applied only to the correctly classified subset of a batch $\mathcal{B}$:
    \begin{equation}
     \mathcal{S}_{\mathrm{corr}}
     = \left\{(x,y)\in\mathcal{B}\mid
     \arg\max_{c\in\mathcal{Y}} [F_G(x)]_c = y\right\},
    \end{equation}
    \begin{equation}
    L_{\mathrm{margin}}
    = - \frac{1}{\max(1,|\mathcal{S}_{\mathrm{corr}}|)}
    \sum_{(x,y)\in\mathcal{S}_{\mathrm{corr}}} m(x).
    \end{equation}
    This selectively enlarges the margins around samples that are already correctly predicted, and these margins directly determine the certified cache radius.

    \textbf{Enhanced training recipe (optional).}~
    For multi-class extensions at higher class counts, cross-entropy alone or a single margin term is often insufficient to sustain a usable certified radius. We therefore additionally consider an enhanced training recipe: beyond the core objective above, it further adds a center constraint that promotes intra-class compactness and an additional inter-class separation term that promotes inter-class separation, while optionally retaining the distillation signal. These additional terms only change the feature geometry learned by \texttt{GuardNet}; they do not change the online hit rule of the cache, nor do they change the certification criterion in Section~\ref{sec:theory}.

    In the main-text experiments, the three 10-class main results use cross-entropy supervision only ($\alpha=0,\lambda_m=0$); the CIFAR-10 training-objective ablation examines the independent effects of the margin and center constraints, respectively; and the Tiny-ImageNet multi-class extension uses the full enhanced recipe comprising the margin, interclass, center, and distill terms.

    \subsection{Feature Cache Construction}
    \label{subsec:cache_construction}

    Once \texttt{GuardNet} is trained, the framework enters the offline cache-construction phase. Its task is to select a set of high-quality cache entries from the reference set $\mathcal{D}_{\mathrm{ref}}$, so that it can cover the queries in the test distribution as much as possible under a given capacity budget.

    For each candidate sample $x_i$, we first compute $z_i=G(x_i)$, the \texttt{GuardNet} label
    $y_i^G=\arg\max_{c} [F_G(x_i)]_c$, and the payload $\tilde{y}_i$ selected by the construction strategy.
    This paper considers three payload semantics. The first is the \emph{proxy label},
    which directly stores $\tilde{y}_i = y_i^G$; in this case the cached payload is naturally aligned with the certificate, and it is the default setting adopted in the main-text experiments.
    The second is the \texttt{MainNet} prediction scheme, which stores $\tilde{y}_i = \hat{y}_M(x_i)$;
    the third is the \emph{ground-truth} scheme, which stores $\tilde{y}_i = y_i$.
    The latter two are mainly used to analyze the effect of varying the payload source on system behavior.

    The \texttt{GuardNet} margin $m_i=[F_G(x_i)]_{c_1}-[F_G(x_i)]_{c_2}$ is used to compute
    the certified radius $r_i$. Depending on the radius mode $\rho$, this paper supports two choices: the conservative radius
    $r_i=m_i/(\sigma_{\max}(W)\sqrt{2})$ (Lemma~\ref{lemma:safety}) or the tighter per-pair radius $r_i^*$ (Remark~\ref{remark:tighter_radius}). Candidates with non-positive radius are discarded outright.

    The certificate is attached to $y_i^G$. When the cache-center label consistency
    $y_i^G=\tilde{y}_i$ holds, the returned payload is directly aligned with the certificate;
    otherwise the reliability of the payload is assessed empirically.
    To maximize the coverage under a fixed budget, the selection strategy balances radius size against feature-space diversity.
    Algorithm~\ref{alg:lipcache_construction}~summarizes this construction process.

    \begin{algorithm}[!t]
    \caption{Offline construction of the feature store.}
    \label{alg:lipcache_construction}
    \begin{algorithmic}[1]
    \REQUIRE Reference set $\mathcal{D}_{\mathrm{ref}}$, models $M, G, W$, cache budget $B_C$
    \REQUIRE Payload construction rule, payload source (proxy, \texttt{MainNet} $M$ prediction, or label $y$), cache selection strategy $\mathcal{S}$
    \REQUIRE Radius mode $\rho$, optional cache-center label-consistency flag
    \ENSURE Feature cache $\mathcal{C}$
    \STATE Initialize candidate pool $\mathcal{P}\leftarrow\varnothing$
    \STATE Compute $\sigma_{\max}(W)$
    \IF{$\sigma_{\max}(W)\le 0$}
        \STATE Report that the \texttt{GuardNet} classification head is degenerate and stop construction
    \ENDIF
    \FOR{each candidate sample $x_i\in\mathcal{D}_{\mathrm{ref}}$}
        \STATE Compute $z_i=G(x_i)$ and \texttt{GuardNet} logits $F_G(x_i)$
        \STATE Compute \texttt{GuardNet} label $y_i^G$, payload $\tilde{y}_i$, and margin $m_i$
        \STATE Compute certified radius $r_i$ using radius mode $\rho$
        \IF{cache-center label consistency is required and $y_i^G\neq\tilde{y}_i$}
            \STATE Skip $x_i$
        \ELSIF{$r_i\le 0$}
            \STATE Skip $x_i$
        \ELSE
            \STATE Add candidate entry $(z_i,\tilde{y}_i,r_i)$ to $\mathcal{P}$
        \ENDIF
    \ENDFOR
    \STATE Use $\mathcal{S}$ to select at most $B_C$ entries from $\mathcal{P}$
    \STATE The selection process balances certified radius and feature diversity
    \STATE Store the selected entries in $\mathcal{C}$
    \RETURN $\mathcal{C}$
    \end{algorithmic}
    \end{algorithm}

    \subsection{Online Inference and Cache-Hit Detection}
    \label{subsec:online_inference}

    After offline training and cache construction are complete, \texttt{LipCache} enters the online inference state. At runtime (Figure~\ref{fig:inference_flow}), each input $x_{\mathrm{new}}$ is
    encoded as $z_{\mathrm{new}}=G(x_{\mathrm{new}})$, and the hit set is
    \begin{equation}
    \mathcal{H}(z_{\mathrm{new}})=\{i \mid \|z_{\mathrm{new}}-z_i\|_2 < r_i\}.
    \end{equation}
    If $\mathcal{H}\neq\varnothing$, it takes the payload from the nearest hit entry:
    \begin{equation}
    i^{*}(x_{\mathrm{new}})=\arg\min_{i\in\mathcal{H}(z_{\mathrm{new}})} \|z_{\mathrm{new}}-z_i\|_2.
    \end{equation}
    The system predicts
    \begin{equation}
    \hat{y}_{\mathrm{sys}}(x_{\mathrm{new}})=
    \begin{cases}
    \tilde{y}_{i^{*}(x_{\mathrm{new}})}, & \mathcal{H}(z_{\mathrm{new}})\neq\varnothing,\\[4pt]
    \hat{y}_{M}(x_{\mathrm{new}}), & \mathcal{H}(z_{\mathrm{new}})=\varnothing.
    \end{cases}
    \end{equation}
    On a miss, the system falls back to $M$; if write-back is enabled, a missed sample may be added to the cache under the same construction strategy.

    \begin{figure*}[!htbp]
    \centering
    \includegraphics[width=0.9\linewidth]{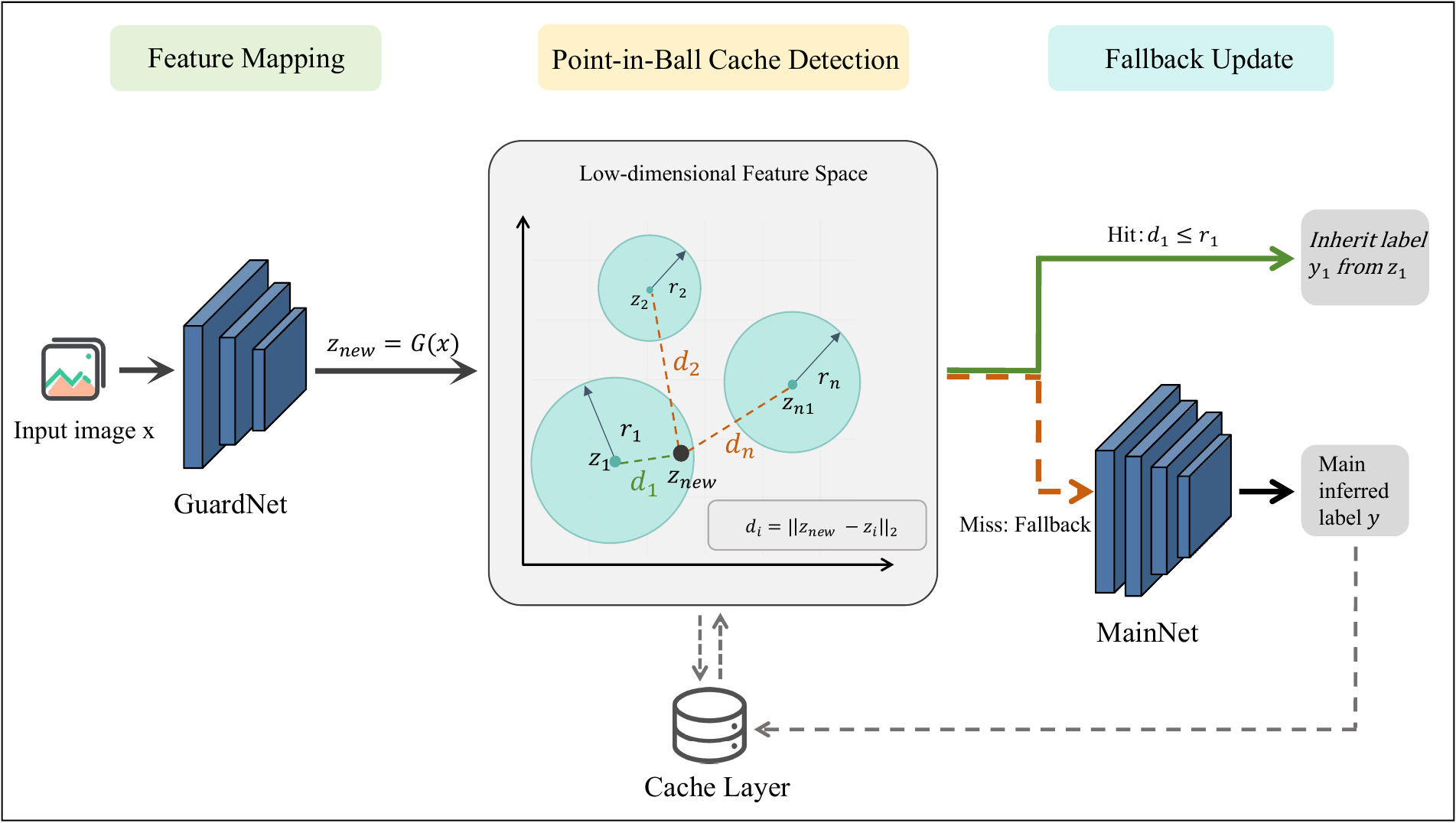}
    \caption{Online inference pipeline of \texttt{LipCache}. For a new input, \texttt{GuardNet} extracts a compact feature and uses it to search the cache for certified reuse. A cache hit returns the stored payload; a miss triggers a \texttt{MainNet} fallback.}
    \label{fig:inference_flow}
    \end{figure*}

    The computational overhead of the above pipeline directly determines the practical gain of the framework. Let $C_G$ and $C_M$ denote the per-query cost of \texttt{GuardNet} and \texttt{MainNet}, respectively. When maintaining $N$ cache entries in a
    $d$-dimensional feature space, the expected cost per query is
    \begin{equation}
    \mathbb{E}[\mathrm{Cost}]
    =O(C_G + Nd + \mathrm{InvokeRate}_{M}\,C_M).
    \end{equation}
    When $C_G+Nd$ is much smaller than $C_M$, the system can reduce the overall overhead in expectation even if the hit rate is not high. For large caches, hierarchical indexing or a vector database can further reduce the search complexity to sublinear, without changing the decision rule itself.
    Algorithm~\ref{alg:cache_hit}~summarizes the online pipeline.

    \begin{algorithm}[!t]
    \caption{Online inference with cache reuse.}
    \label{alg:cache_hit}
    \begin{algorithmic}[1]
    \REQUIRE Query batch $\{x_k\}_{k=1}^{B}$, main model $M$, trained \texttt{GuardNet}
    encoder $G$ and classification head $W$, feature cache
    $\mathcal{C}=\{(z_i,\tilde{y}_i,r_i)\}_{i=1}^{N}$, optional write-back flag
    \ENSURE Predictions $\{\hat{y}_{\mathrm{sys}}(x_k)\}_{k=1}^{B}$
    \STATE Compute \texttt{GuardNet} features $z_k \leftarrow G(x_k)$ for all $k$
    \STATE Compute pairwise distances $D_{k,i}=\|z_k-z_i\|_2$
    \FOR{$k=1$ to $B$}
        \STATE Construct hit set $\mathcal{H}(z_k)=\{i \mid D_{k,i}< r_i\}$
        \IF{$\mathcal{H}(z_k)\neq\varnothing$}
            \STATE $i_k^*=\arg\min_{i\in\mathcal{H}(z_k)} D_{k,i}$
            \STATE $\hat{y}_{\mathrm{sys}}(x_k)\leftarrow \tilde{y}_{i_k^*}$
        \ELSE
            \STATE $\hat{y}_{\mathrm{sys}}(x_k)\leftarrow \arg\max M(x_k)$
            \IF{write-back is enabled and the required payload source is available}
                \STATE Compute radius and payload for $x_k$ using \texttt{GuardNet} logits and $W$; update $\mathcal{C}$
            \ENDIF
        \ENDIF
    \ENDFOR
    \RETURN $\{\hat{y}_{\mathrm{sys}}(x_k)\}_{k=1}^{B}$
    \end{algorithmic}
    \end{algorithm}

    Combining the three stages above, \texttt{LipCache} instantiates the abstract deployment policy of Section~\ref{sec:problem} as a parametric policy family
    $\pi=(\theta_G,\mathcal{C},\rho,\mathcal{R},\omega)$,
    where $\theta_G$ is the \texttt{GuardNet} parameter set, $\mathcal{C}$ is the feature cache, $\rho$ is the radius mode, $\mathcal{R}$ is the hit-resolution rule, and $\omega$ controls the payload-source and cache-center label-consistency options. Different parameter combinations correspond to different operating points---these are precisely the core objects of the experimental evaluation in Section~\ref{sec:experiments}.

    \section{Theoretical Analysis}
    \label{sec:theory}

    Section~\ref{sec:methodology} gives \texttt{LipCache} a complete engineering pipeline, yet its core decision---the cache-hit test---is still an empirical definition: $\|z_{\mathrm{new}}-z_i\|_2<r_i$. This section injects theoretical content into that decision: starting from the Lipschitz property of the classifier in the \texttt{GuardNet} feature space and the local classification margin, we derive a per-sample certified reuse radius $r_i$ and prove that any query falling inside this radius necessarily preserves the same \texttt{GuardNet} prediction label as the cache center. Thereby, the accuracy constraint $\Delta_{\mathrm{acc}}$ in the optimization of Section~\ref{sec:problem} obtains a computable lower bound at the \texttt{GuardNet} level---as long as cache hits occur strictly within the certified radius, label consistency on the \texttt{GuardNet} side is guaranteed mathematically.

    \subsection{Lipschitz Setup}
    \label{subsec:lipschitz_setting}

    The structure of the certified reuse radius can be intuitively understood as ``classification margin divided by the Lipschitz bound'': the larger the margin, the larger the admissible perturbation; the smoother the mapping, the smaller the logit change induced by the same input variation. We therefore first establish the Lipschitz bound of the $\texttt{GuardNet}$ composite map $F_G=f\circ G$.

    All Lipschitz constants are measured under the $\ell_2$ norm. The feature encoder $G$ has a finite Lipschitz constant
    \begin{equation}
    L_G \triangleq \mathrm{Lip}(G),
    \end{equation}
    and the per-layer spectral-normalization construction of Section~\ref{subsec:guardnet_training} enforces $L_G\le 1$.
    For the affine classification head $f(z)=Wz+b$, noting that the bias term cancels automatically in a logit difference, we have
    \begin{equation}
    \|f(z_1)-f(z_2)\|_2
    = \|W(z_1-z_2)\|_2
    \le \sigma_{\max}(W)\|z_1-z_2\|_2,
    \end{equation}
    so $f$ is $\sigma_{\max}(W)$-Lipschitz. By submultiplicativity,
    \begin{equation}
    \mathrm{Lip}(F_G)
    = \mathrm{Lip}(f\circ G)
    \le L_G\,\sigma_{\max}(W).
    \end{equation}
    Under the constraint $L_G\le 1$, this simplifies to $\mathrm{Lip}(F_G)\le\sigma_{\max}(W)$. This means that along any direction in the feature space, the rate of change of $F_G$'s output is controlled by $\sigma_{\max}(W)$. All certificates below are stated in the \texttt{GuardNet} feature space and depend only on $\sigma_{\max}(W)$---this is also the reason why the cache-construction phase in Section~\ref{sec:methodology} needs to compute only this single spectral norm.

    \subsection{Certified Local Label Consistency}
    \label{subsec:certified_consistency}

    With the global Lipschitz bound in hand, we now turn to the local analysis: for a fixed cached sample, within what neighborhood does its \texttt{GuardNet} prediction remain unchanged.

    For a cached sample $x_i$, let its feature be $z_i=G(x_i)$ and its \texttt{GuardNet} prediction label be $y_i^G=\arg\max_{c} f_c(z_i)$. Define the local classification margin of this sample as
    \begin{equation}
    m_i=[f(z_i)]_{y_i^G}-\max_{j\neq y_i^G}[f(z_i)]_j.
    \end{equation}
    Geometrically, $m_i$ measures the logit-space margin between $z_i$ and the nearest decision boundary---the larger the margin, the farther the sample lies from a decision boundary, and the larger the perturbation required to flip its label. This margin-based certification idea is widely adopted in the certified-robustness literature~\cite{DBLP:conf/nips/HuangZSKA21,fazlyab2023dynamicmargin,hu2024recipe,wang2024lipscalability,xu2024eclipse}. Accordingly, the certified reuse radius is defined as
    \begin{equation}
    r_i=\frac{m_i}{\sigma_{\max}(W)\sqrt{2}},
    \end{equation}
    where $\sigma_{\max}(W)>0$; the degenerate case $\sigma_{\max}(W)=0$ means that the classification head degenerates to a constant map, in which case the reuse radius of any positive-margin sample is unbounded, but this does not occur in practice. The $\sqrt{2}$ factor in the denominator comes from the worst-case analysis that simultaneously bounds all competing classes, and it will be tightened in Remark~\ref{remark:tighter_radius}.

    The following lemma is the core theoretical result of this framework; it establishes the guarantee that ``the label is invariant inside the certified ball.''

    \begin{lemma}[Local label consistency]
    \label{lemma:safety}
    Let $z_i$ be a cached feature with \texttt{GuardNet} label $y_i^G$ and positive margin
    $m_i>0$, and assume $\sigma_{\max}(W)>0$. For any new sample
    $x_{\mathrm{new}}$ whose \texttt{GuardNet} feature is
    $z_{\mathrm{new}}=G(x_{\mathrm{new}})$, if
    \begin{equation}
    \|z_{\mathrm{new}}-z_i\|_2 < r_i = \frac{m_i}{\sigma_{\max}(W)\sqrt{2}},
    \end{equation}
    then the \texttt{GuardNet} prediction at $z_{\mathrm{new}}$ is still $y_i^G$.
    \end{lemma}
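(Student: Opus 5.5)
The plan is to reduce label invariance to a statement about pairwise logit differences and bound each one via the spectral norm of $W$. Write $\delta = z_{\mathrm{new}}-z_i$ and, for any competing class $j\neq y_i^G$, let $w_c^\top$ denote the $c$-th row of $W$. Since the bias cancels in differences,
\begin{equation}
[f(z_{\mathrm{new}})]_{y_i^G}-[f(z_{\mathrm{new}})]_j
= \bigl([f(z_i)]_{y_i^G}-[f(z_i)]_j\bigr) + (w_{y_i^G}-w_j)^\top\delta .
\end{equation}
The first term is at least $m_i$ by the definition of the margin. It therefore suffices to show $|(w_{y_i^G}-w_j)^\top\delta|<m_i$ for every $j\neq y_i^G$. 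That gives a strictly positive difference for all competitors, so $y_i^G$ remains the unique argmax at $z_{\mathrm{new}}$.

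To bound the perturbation term, I express it as $(e_{y_i^G}-e_j)^\top W\delta$, where $e_c$ are standard basis vectors in $\mathbb{R}^C$. By Cauchy--Schwarz,
\begin{equation}
|(e_{y_i^G}-e_j)^\top W\delta| \le \|e_{y_i^G}-e_j\|_2\,\|W\delta\|_2 \le \sqrt{2}\,\sigma_{\max}(W)\,\|\delta\|_2 .
\end{equation}
This is exactly where the $\sqrt{2}$ factor enters. Combining this with the hypothesis $\|\delta\|_2<r_i=m_i/(\sigma_{\max}(W)\sqrt{2})$ and with $\sigma_{\max}(W)>0$ yields the strict bound $<m_i$. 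The argument is uniform in $j$, so no case split is needed.

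The only delicate points are bookkeeping rather than real obstacles. First, the strict inequality must carry through to a strict logit gap, which also handles ties and makes the argmax well defined. Second, the statement lives entirely in feature space, so $L_G$ plays no role here; it matters only if one wants to translate the radius back to input space. I would also note, as a sanity check, that $\|w_{y}-w_j\|_2\le\sqrt2\,\sigma_{\max}(W)$ is the quantity that the per-pair refinement in Remark~\ref{remark:tighter_radius} presumably replaces by its exact value.
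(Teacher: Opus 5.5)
Your proposal is correct and follows essentially the same argument as the paper: split each pairwise logit gap at $z_{\mathrm{new}}$ into the original gap, which is at least $m_i$, plus a perturbation term, then bound that term by $\sqrt{2}\,\sigma_{\max}(W)\|z_{\mathrm{new}}-z_i\|_2$ using Cauchy--Schwarz with a $\pm1$ basis-difference vector of norm $\sqrt{2}$. The only difference is notational: the paper's $\delta$ is the logit displacement $f(z_{\mathrm{new}})-f(z_i)$, bounded through the Lipschitz property of $f$, whereas your $\delta$ is the feature displacement, bounded through $\|W\delta\|_2\le\sigma_{\max}(W)\|\delta\|_2$.
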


    The core idea of the proof is to use the Lipschitz bound to translate the displacement in feature space into an upper bound on the perturbation in logit space, and then to show that this perturbation is insufficient to cover the original margin $m_i$.

    \textit{Proof:} Let
    $\delta = f(z_{\mathrm{new}})-f(z_i)$. By the Lipschitz property of $f$,
    \begin{equation}
    \|\delta\|_2 \le \sigma_{\max}(W)\|z_{\mathrm{new}}-z_i\|_2.
    \end{equation}
    To preserve the label $y_i^G$, it suffices to show that for every competing class $j\neq y_i^G$ we have $[f(z_{\mathrm{new}})]_{y_i^G}>[f(z_{\mathrm{new}})]_{j}$. Substituting $f(z_{\mathrm{new}})=f(z_i)+\delta$ into the logit difference:
    \begin{equation}
    \begin{aligned}
    \relax [f(z_{\mathrm{new}})]_{y_i^G} - [f(z_{\mathrm{new}})]_j
    &= [f(z_i)]_{y_i^G}+\delta_{y_i^G} - [f(z_i)]_j-\delta_j \\
    = \big([f(z_i)]_{y_i^G}-&[f(z_i)]_j\big)
     - \big(\delta_j-\delta_{y_i^G}\big).
    \end{aligned}
    \end{equation}
    By the definition of the margin, $[f(z_i)]_{y_i^G}-[f(z_i)]_j \ge m_i$, so a sufficient condition for preserving the label is
    \begin{equation}
    \delta_j-\delta_{y_i^G}<m_i.
    \end{equation}
    To bound the left-hand side, introduce the auxiliary vector $e_{j,y_i^G}$: it takes the value $1$ at position $j$, $-1$ at position $y_i^G$, and $0$ elsewhere. Then $\|e_{j,y_i^G}\|_2 = \sqrt{2}$, and
    \begin{equation}
    \delta_j-\delta_{y_i^G}
    = e_{j,y_i^G}^{\top}\delta
    \le \|e_{j,y_i^G}\|_2\|\delta\|_2
    = \sqrt{2}\|\delta\|_2,
    \end{equation}
    where the inequality follows from Cauchy--Schwarz. Combining with the Lipschitz bound,
    \begin{equation}
    \delta_j-\delta_{y_i^G}
    \le \sqrt{2}\sigma_{\max}(W)\|z_{\mathrm{new}}-z_i\|_2.
    \end{equation}
    If $\|z_{\mathrm{new}}-z_i\|_2 < m_i/(\sigma_{\max}(W)\sqrt{2})$, then $\delta_j-\delta_{y_i^G} < m_i$, i.e., the sufficient condition holds. Substituting gives $[f(z_{\mathrm{new}})]_{y_i^G}-[f(z_{\mathrm{new}})]_j > 0$ for all $j\neq y_i^G$, and hence $y_i^G$ remains the argmax label at $z_{\mathrm{new}}$.\hfill$\blacksquare$

    This lemma directly provides the safety criterion for online hits in Section~\ref{sec:methodology}: as long as the query feature falls inside the certified ball, the \texttt{GuardNet} classification of that query cannot differ from the cache center---in other words, the cached payload $\tilde{y}_i$ (when set to the proxy label $y_i^G$) is theoretically reliable.

    \begin{corollary}[Input-level sufficient condition]
    \label{cor:input}
    Given $L_G\le 1$, any input satisfying $\|x_{\mathrm{new}}-x_i\|_2<r_i$ is guaranteed to satisfy $\|z_{\mathrm{new}}-z_i\|_2<r_i$, and hence its \texttt{GuardNet} prediction is $y_i^G$. More generally, if $L_G>0$, the sufficient condition is $\|x_{\mathrm{new}}-x_i\|_2<r_i/L_G$.
    \end{corollary}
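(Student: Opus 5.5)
The plan is to transfer the feature-space certificate of Lemma~\ref{lemma:safety} back to input space through the Lipschitz constant of the encoder. By the definition $L_G=\mathrm{Lip}(G)$ in Section~\ref{subsec:lipschitz_setting}, for any two inputs we have
\begin{equation}
\|z_{\mathrm{new}}-z_i\|_2=\|G(x_{\mathrm{new}})-G(x_i)\|_2\le L_G\|x_{\mathrm{new}}-x_i\|_2 .
\end{equation}
This single inequality is the only new ingredient. Everything else follows by chaining it with the strict hypothesis on the input distance and then invoking Lemma~\ref{lemma:safety}. We implicitly use the lemma's standing assumptions, namely $m_i>0$ (so $r_i>0$) and $\sigma_{\max}(W)>0$, which hold for every cached entry by the construction in Algorithm~\ref{alg:lipcache_construction}.

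For the first claim, assume $L_G\le 1$ and $\|x_{\mathrm{new}}-x_i\|_2<r_i$. The display gives
\begin{equation}
\|z_{\mathrm{new}}-z_i\|_2\le L_G\|x_{\mathrm{new}}-x_i\|_2\le \|x_{\mathrm{new}}-x_i\|_2<r_i .
\end{equation}
The middle inequality needs care only in the degenerate case $L_G=0$, where the left-hand side is zero and the conclusion is immediate since $r_i>0$. The strict inequality survives because it is the last link in the chain. Lemma~\ref{lemma:safety} then applies verbatim and yields that the \texttt{GuardNet} prediction at $x_{\mathrm{new}}$ equals $y_i^G$.

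For the general claim with $L_G>0$, assume $\|x_{\mathrm{new}}-x_i\|_2<r_i/L_G$. Multiplying by $L_G>0$ preserves strictness, so $\|z_{\mathrm{new}}-z_i\|_2\le L_G\|x_{\mathrm{new}}-x_i\|_2<r_i$, and we again conclude via Lemma~\ref{lemma:safety}. The first claim is the special case in which $r_i\le r_i/L_G$ whenever $0<L_G\le1$. I would note this to show that the two statements are consistent.

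There is no real obstacle here. The one point worth flagging is that the result depends on $L_G$ being a genuine global upper bound on $\mathrm{Lip}(G)$ in the $\ell_2$ norm, and not merely an estimate. This is exactly what the layerwise construction of Section~\ref{subsec:guardnet_training} provides: normalized convolutions, spectrally normalized linear layers, and non-expansive activations, pooling, and $\tanh$, combined by composition. I would cite that construction rather than reprove it.
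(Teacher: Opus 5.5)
Your proposal is correct and follows the paper's proof: the paper likewise uses the single chain $\|z_{\mathrm{new}}-z_i\|_2=\|G(x_{\mathrm{new}})-G(x_i)\|_2\le L_G\|x_{\mathrm{new}}-x_i\|_2<r_i$ and then appeals to Lemma~\ref{lemma:safety}. Your aside about the degenerate case $L_G=0$ is unnecessary, since $L_G\|x_{\mathrm{new}}-x_i\|_2\le\|x_{\mathrm{new}}-x_i\|_2$ holds for every $L_G\in[0,1]$.
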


    \textit{Proof:}\quad
    $\|z_{\mathrm{new}} - z_i\|_2
    = \|G(x_{\mathrm{new}}) - G(x_i)\|_2
    \le L_G \|x_{\mathrm{new}} - x_i\|_2
    < r_i$.
    The conclusion follows directly from Lemma~\ref{lemma:safety}.\hfill$\blacksquare$

    The above corollary shows that, when \texttt{GuardNet} is 1-Lipschitz, the $r_i$-ball in input space directly constitutes a sufficient region for certified reuse. The online cache decision, however, still uses the feature-space check directly---it is tighter and requires no back-propagation through $L_G$.

    \subsection{Tighter Radius and Certificate Scope}
    \label{subsec:tighter_radius}

    \begin{remark}[Tighter per-pair radius]
    \label{remark:tighter_radius}
    The $\sqrt{2}$ factor in the radius $r_i$ of Lemma~\ref{lemma:safety} comes from taking the worst case over all $C-1$ competing classes simultaneously. By analyzing each class separately, a tighter radius can be obtained. Let $W_c\!\in\!\mathbb{R}^{d}$ denote the $c$-th row of $W$. For each competing class $j\neq y_i^G$, decompose the logit difference as
    \begin{multline}
    [f(z)]_{y_i^G} - [f(z)]_j
    = \big([f(z_i)]_{y_i^G} - [f(z_i)]_j\big)\\
    + (W_{y_i^G} - W_j)^{\!\top}(z - z_i).
    \end{multline}
    By Cauchy--Schwarz, the cross term is bounded by $-\|W_{y_i^G} - W_j\|_2\,\|z - z_i\|_2$, so when $\|z - z_i\|_2 < r_{i,j}$ we have $[f(z)]_{y_i^G} > [f(z)]_j$, where
    \begin{equation}
    r_{i,j} = \frac{[f(z_i)]_{y_i^G} - [f(z_i)]_j}
    {\|W_{y_i^G} - W_j\|_2}.
    \end{equation}
    If $\|W_{y_i^G}-W_j\|_2=0$, the per-pair logit difference is constant and corresponds to no finite constraint.
    The tight per-sample radius takes the minimum:
    $r_i^* = \min_{j\neq y_i^G}\, r_{i,j}$.
    Since $\|W_{y_i^G} - W_j\|_2 \le \sqrt{2}\,\sigma_{\max}(W)$ and
    $[f(z_i)]_{y_i^G} - [f(z_i)]_j \ge m_i$,
    we always have $r_{i,j} \ge r_i$, and hence $r_i^* \ge r_i$.
    For consistency of exposition, the rest of the paper mainly uses the conservative radius $r_i$; the experiments report results for both radii.
    \end{remark}

    \begin{remark}[Certificate scope and capability boundary]
    \label{remark:scope}
    The object certified by Lemma~\ref{lemma:safety} is the local consistency of the \texttt{GuardNet} classifier in the feature space: it guarantees that the \texttt{GuardNet} prediction label $y_i^G$ is invariant inside the certified ball. When the cached payload $\tilde{y}_i$ happens to equal $y_i^G$ (the proxy-label mode), the label returned online is directly aligned with the certificate---this is the default setting adopted in the main experiments of this paper.

    However, the lemma does \emph{not} provide formal guarantees for the following cases: (1) when the payload source is the \texttt{MainNet} prediction $\hat{y}_M(x_i)$ or the ground-truth label $y_i$ and it is inconsistent with $y_i^G$, the cached payload $\tilde{y}_i$ returned on a hit may differ from the certified $y_i^G$; (2) whether the \texttt{MainNet} prediction over the entire certified ball is consistent with the payload. In other words, the theoretical certificate of \texttt{LipCache} is on the \texttt{GuardNet} side, not on the \texttt{MainNet} side. The consistency between \texttt{GuardNet} and \texttt{MainNet} is indirectly promoted by the training objective and evaluated empirically in the experiments. To obtain a formal \texttt{MainNet}-side guarantee, one may further stack Lipschitz certification or randomized smoothing of \texttt{MainNet} itself.
    \end{remark}

    Combining the derivations of this section, the online cache-hit decision of \texttt{LipCache} obtains a complete theoretical skeleton: starting from the Lipschitz control of \texttt{GuardNet} (Section~\ref{subsec:guardnet_training}), Lemma~\ref{lemma:safety} turns the local classification margin into a computable per-sample certified radius, Remark~\ref{remark:tighter_radius} provides a larger and less conservative per-sample certified radius, Corollary~\ref{cor:input} extends the condition to input space, and finally Remark~\ref{remark:scope} clarifies the applicable scope of the certificate. Under this theoretical framework, the experimental evaluation in Section~\ref{sec:experiments} reports not only the hit rate and accuracy, but also verifies, via the ``certified-consistency rate,'' whether every accepted cache reuse indeed occurs inside the certified boundary.

\section{Experiments}
\label{sec:experiments}

Section~\ref{sec:theory} establishes the theoretical guarantee for cache reuse in \texttt{LipCache}: any query that falls inside the certified radius $r_i$ necessarily has a \texttt{GuardNet} prediction label consistent with the cache center. This section subjects that theoretical guarantee to empirical examination, organizing the experimental evaluation around three questions. First, on real image classification tasks, can certified caching form non-trivial reuse and stably hold under limited accuracy loss. Second, is the certified radius necessary relative to empirical thresholds---that is, does there exist a simpler alternative boundary that equally satisfies certified consistency. Third, with the certification condition held fixed, how do cache strategy, feature dimensionality, training objective, and class count jointly determine the operating point among hit rate, accuracy, and latency gains. It should be noted that the evaluation goal of this section is the viability of the certification mechanism itself under i.i.d. test distributions, i.e., examining whether the certified radius can form non-trivial reuse while preserving theoretical consistency. Temporal redundancy, concept drift, and consecutive-frame reuse rates in streaming deployment constitute evaluation problems of a different nature, and their benchmark design lies outside the scope of this paper (see Section~\ref{subsec:future_work}).

The main text consistently reports three core metrics: the cache hit rate $H$ (the frequency with which the system reuses the cache), the end-to-end accuracy (the agreement between the overall system output and the ground truth), and the certified-consistency rate (the fraction of all accepted cache hits that satisfy the certification condition; $100\%$ means that no theoretical violation occurred). In addition, the measured system-level speedup is defined as
\[
S_{\mathrm{time}}=
\frac{T_{\mathrm{Main}}}
{T_{\mathrm{guard}}+T_{\mathrm{search}}+T_{\mathrm{fallback}}},
\]
where the numerator is the measured time of executing \texttt{MainNet} alone, and the denominator is the measured total time of \texttt{GuardNet} encoding, cache search, and, on a miss, the \texttt{MainNet} fallback. Although the speedup is sometimes written as $1/(1-H)$, this expression only reflects the reduction in \texttt{MainNet} invocations and ignores the encoding and search overhead; we therefore adopt the measured definition $S_{\mathrm{time}}$ above as the reported metric.

\subsection{Experimental Setup}
\label{subsec:exp_setup}

To keep the experiments comparable, the main-text experiments share a common base protocol. \texttt{MainNet} is responsible only for fallback inference on misses and serves as the standalone-accuracy and timing baseline; \texttt{GuardNet} is responsible for extracting features that satisfy the $1$-Lipschitz constraint and accordingly defining the theoretical radius for cache reuse. Unless stated otherwise, the three 10-class tasks all use the tight certified radius, a cache-construction method that prioritizes the certified radius combined with de-duplication, returns the proxy label on a hit, caches $200$ samples per class, and repeats the experiments over random seeds $\{42,123,2024\}$. The standalone accuracies of the three \texttt{MainNet} models are as follows: $94.78\%$ on CIFAR-10~\cite{krizhevsky2009cifar}, $87.0\%$ on Tiny-ImageNet 10 classes (denoted Tiny-10)~\cite{le2015tinyimagenet}, and $97.19\%$ on SVHN~\cite{netzer2011svhn}. In the current final experiments, the certified-consistency rate of all certified-radius results is $100\%$, and no Lipschitz-constraint violation or theoretical violation at the proxy-decision level is observed.

The training-objective ablation changes only how \texttt{GuardNet} is trained, with everything else held fixed; the multi-class extension experiment changes only the class count and the corresponding \texttt{GuardNet} training recipe; the perturbation and deployment experiments examine the behavior of the system under distributional perturbation and latency constraints on top of the default operating point.

\textbf{Scope of the evaluation}: The experiments are designed to validate the viability and cross-task stability of the per-sample certified reuse mechanism, rather than to propose a new SOTA image classification benchmark. The three tasks respectively represent three typical 10-class distributions---natural images, fine-grained classification, and domain shift---which suffice to examine the sensitivity of the certified radius to task geometry. Extension to full ImageNet (1000 classes), CIFAR-100, and streaming video benchmarks lies outside the scope of this paper and is left to future work.

\subsection{Feasibility of Certified Caching}
\label{subsec:certificate_validation}

    Figure~\ref{fig:main_results} reports the hit rate, end-to-end accuracy, certified-consistency rate, and measured speedup on the three 10-class tasks. Figure~\ref{fig:main_profile_heatmap} portrays the three tasks side by side from the perspective of relative operating points.

Under the conservative certified radius, the average hit rates of CIFAR-10, Tiny-10, and SVHN are $0.341\pm0.010$, $0.303\pm0.013$, and $0.470\pm0.007$, respectively; switching to the tight certified radius, the hit rates rise to $0.342\pm0.011$, $0.472\pm0.012$, and $0.502\pm0.013$. The corresponding end-to-end accuracies remain at $0.921$, $0.867$, and $0.960$, and all results pass the certified-consistency check.

A non-trivial certified reuse region is observed on all three tasks. Under the tight certified radius, the hit rates of CIFAR-10, Tiny-10, and SVHN are all significantly above zero, indicating that \texttt{LipCache} does not trigger theoretically safe reuse only in the vicinity of a tiny minority of samples, but rather can form stable usable hits over the full test set. Meanwhile, the end-to-end accuracy is only about $2.70$, $0.27$, and $1.19$ percentage points lower than the respective standalone \texttt{MainNet} accuracy, showing that certified caching does not trade a large loss in prediction quality for reuse opportunities.

The difference between the conservative and the tight certified radius reveals the basic principle that ``the certification boundary can be tightened but cannot be loosened arbitrarily.'' On CIFAR-10, the two radii are already close, and the tight radius brings only a $0.15$ percentage-point gain in hit rate and about $0.04\times$ in speedup, with accuracy nearly unchanged. On Tiny-10, the tight certified radius raises the hit rate from $0.303$ to $0.472$ and the measured speedup from $1.11\times$ to $1.31\times$, with accuracy dropping by only about $0.20$ percentage points---the conservative radius is overly cautious on higher-resolution images. SVHN lies between the two: the hit rate rises from $0.470$ to $0.502$, the speedup from $1.58\times$ to $1.65\times$, and the accuracy drops by about $0.21$ percentage points. Overall, the tight certified radius trades a moderately enlarged certified reuse region for a higher hit rate, but without breaking certified consistency.

The differences across the three datasets reflect the systematic influence of task geometry on the efficiency of certified reuse. SVHN simultaneously exhibits the highest hit rate and the highest measured speedup---on tasks such as digit images, where intra-class variation is constrained, the local geometry learned by \texttt{GuardNet} is more stable, and the certified reuse region more easily covers test samples. The hit rate of Tiny-10 is notably higher than that of CIFAR-10, but its speed gain does not increase proportionally, because the higher input resolution raises the cost of \texttt{GuardNet} encoding and nearest-neighbor search, weakening the translation of hit-rate growth into real wall-clock gains. CIFAR-10 has the lowest hit rate, yet it still maintains a measured speedup above $1.3\times$---under a lower encoding cost, medium-scale certified hits are already sufficient to yield stable system gains.

Figure~\ref{fig:main_profile_heatmap} displays, in a column-normalized side-by-side manner, the hit rate under the tight certified radius, the hit-rate gain relative to the conservative radius, the accuracy retention relative to \texttt{MainNet}, the overall prediction agreement with \texttt{MainNet}, and the measured speedup. SVHN dominates simultaneously on hit rate, accuracy retention, and speedup; Tiny-10 stands out most on hit-rate gain; and CIFAR-10 maintains a high level of agreement with \texttt{MainNet} but has a limited hit-rate gain. This structured comparison further shows that the different benefit dimensions of \texttt{LipCache} are emphasized differently across tasks.

In terms of random-seed stability, the fluctuations of the three main results are all small. Under the tight certified radius, the hit-rate standard deviations of CIFAR-10, Tiny-10, and SVHN are $0.011$, $0.012$, and $0.013$, respectively, and the accuracy standard deviations all remain at about the $0.002$ level or below, indicating that the main conclusions do not depend on a particular training seed or cache-sample composition.

Overall, Figures~\ref{fig:main_results} and \ref{fig:main_profile_heatmap} jointly establish a consistent result pattern for \texttt{LipCache} on three 10-class tasks with markedly different statistical structure: non-zero certified hits, limited accuracy loss, measurable latency gains, and zero theoretical violations.

\begin{figure*}[!htbp]
\centering
\includegraphics[width=0.98\linewidth]{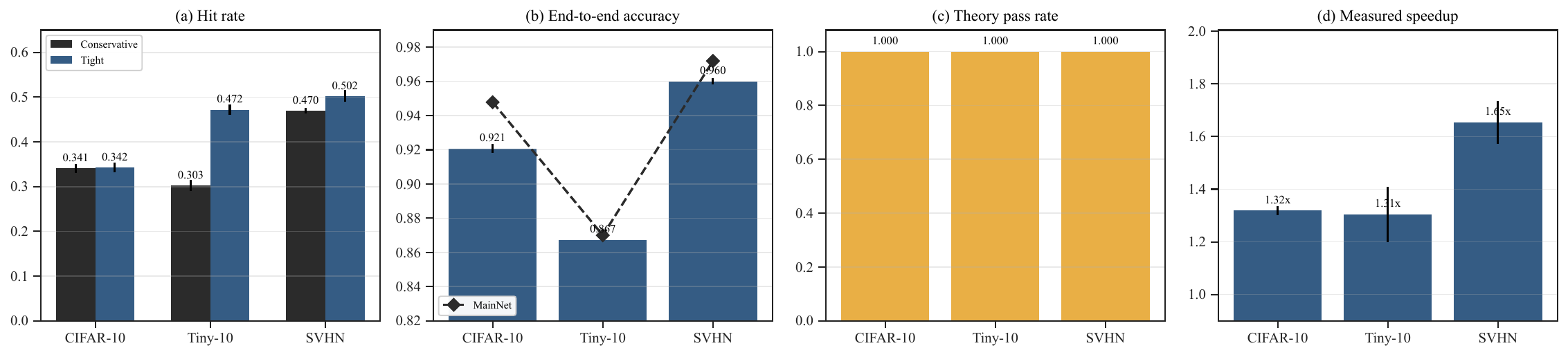}
\caption{Main results on the three 10-class tasks. From left to right: cache hit rates under the conservative versus tight certified radius; end-to-end accuracy under the tight radius, with the standalone \texttt{MainNet} accuracy marked by a dashed line; certified-consistency rate; and measured speedup. All certified-radius results pass the consistency check.}
\label{fig:main_results}
\end{figure*}

\begin{figure*}[!htbp]
\centering
\includegraphics[width=0.98\linewidth]{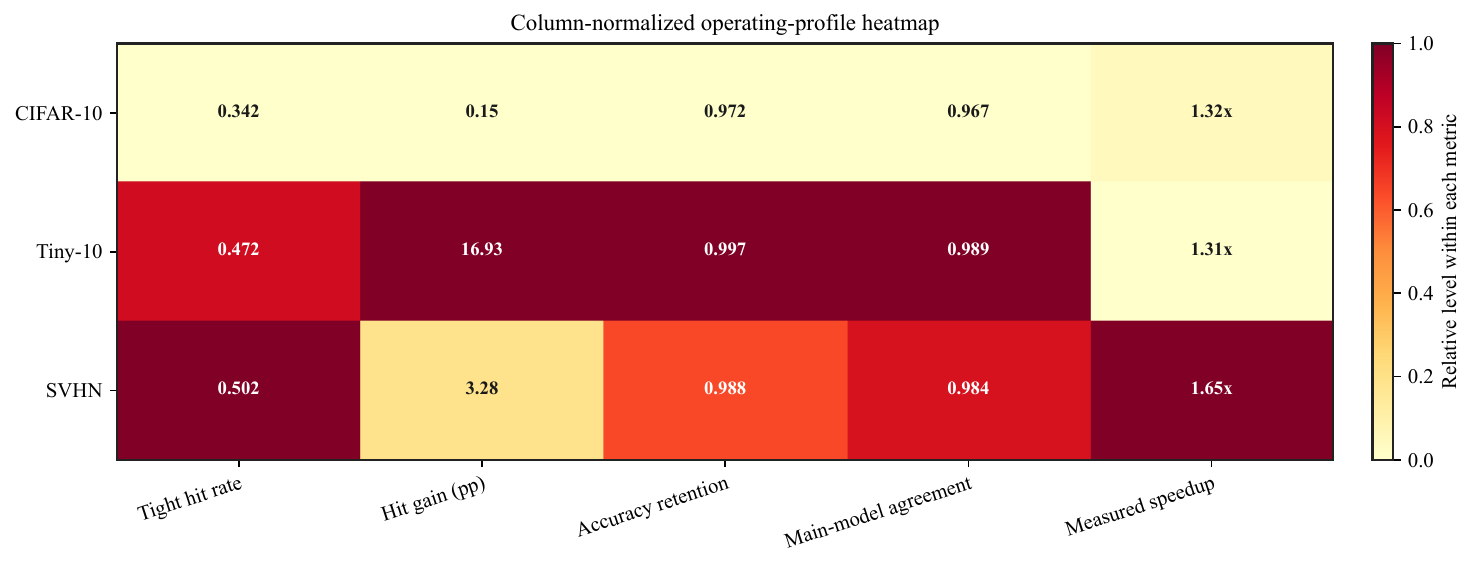}
\caption{Operating-point profile heatmap of the three 10-class tasks. The columns correspond to the hit rate under the tight certified radius, the hit-rate gain relative to the conservative radius, the accuracy retention relative to \texttt{MainNet}, the overall prediction agreement with \texttt{MainNet}, and the measured speedup. Colors are normalized within each column; the annotated numbers retain their original magnitudes.}
\label{fig:main_profile_heatmap}
\end{figure*}

\subsection{Comparison with Related Efficiency Systems}
\label{subsec:related_scope}

Related work shares the goal of reducing inference cost, but the savings occur at different points in the execution path. BranchyNet~\cite{branchynet_2016} and MSDNet~\cite{msdnet_2018} retrain multi-exit networks so that easy inputs leave early at the execution node; they reduce computation after a request reaches that node and do not reuse results associated with prior inputs. NoScope~\cite{noscope2017} specializes a cascade to fixed-camera video, DeepCache~\cite{deepcache2018} reuses regions across consecutive video frames, and Semantic Memory~\cite{li2021semanticmemory} applies empirical semantic caching through cross-layer similarity. Their locality assumptions, acceptance conditions, and deployment paths differ from ours. \texttt{LipCache} keeps \texttt{MainNet} unchanged, uses a lightweight \texttt{GuardNet} to retrieve cached proxy labels at the edge, and contacts the cloud only on a miss. It therefore first reduces uploads and cloud invocations, and can then be composed with a cloud early-exit network to reduce the cloud computation of misses.

\subsubsection{Edge--Cloud Resource Accounting}
\label{subsubsec:edge_footprint_comparison}

Table~\ref{tab:edge_footprint_comparison} reports both edge-resident model footprint and a common edge--cloud deployment accounting model. Parameter counts and raw FP32 weights describe static weights in the current CIFAR-10 implementations, excluding activations, runtime state, cache entries, and framework overhead; they are therefore necessary but not sufficient evidence of edge-deployment feasibility. If an early-exit network is placed at the edge, its complete backbone and all exits must remain resident: the three-exit BranchyNet has $1.759$M parameters/$6.71$ MiB and the seven-exit MSDNet has $2.987$M/$11.39$ MiB, whereas the base \texttt{GuardNet} ($d=64,c=32$) used by \texttt{LipCache} has $0.282$M/$1.08$ MiB. Normalize the compute budget of a complete cloud prediction to $100\%$, let the certified CIFAR-10 operating point have hit rate $H=0.348$, and use the native early-exit compute speedups relative to each system's complete backend, $S_B=2.31$ for BranchyNet and $S_M=2.14$ for MSDNet. For \texttt{LipCache}, the uploaded-request fraction is $1-H$; when a cloud early-exit backend handles only misses, normalized cloud computation is $(1-H)/S_B$ or $(1-H)/S_M$. Cloud-compute saving is $1-C$ and cloud-capacity gain is $1/C$, where $C$ is the normalized cloud computation.

\begin{table*}[!t]
\centering
\scriptsize
\renewcommand{\arraystretch}{1.15}
\caption{Edge-resident model footprint and edge--cloud resource accounting at the CIFAR-10 operating point. Footprint denotes the static FP32 weight volume of the edge-resident model, distinct from the runtime compute cost analyzed in later subsections. Uplink traffic and cloud computation are normalized to a complete cloud prediction ($100\%$); cloud computation applies early-exit speedups as conditional-compute multipliers, yielding a deployment accounting model rather than a cross-architecture comparison of FLOPs, latency, accuracy, or energy.}
\label{tab:edge_footprint_comparison}
\begin{tabular}{@{}c l c c c c c@{}}
\toprule
ID & \shortstack{Configuration\\(edge $\rightarrow$ cloud)} & \shortstack{Edge footprint\\(params/size in FP32 MiB)} & \shortstack{Uplink traffic\\(vs. all-cloud)} & \shortstack{Cloud\\compute} & \shortstack{Compute\\saved} & \shortstack{Capacity\\gain} \\
\midrule
\multicolumn{7}{l}{\textit{Cloud-only baselines}} \\
(1) & --- $\rightarrow$ \texttt{MainNet}   & ---                                          & $100.0\%$ & $100.0\%$ & $0.0\%$  & $1.00\times$ \\
(2) & --- $\rightarrow$ BranchyNet          & $1.759$M$^\dagger$ / $6.71^\dagger$           & $100.0\%$ & $43.3\%$  & $56.7\%$ & $2.31\times$ \\
(3) & --- $\rightarrow$ MSDNet              & $2.987$M$^\dagger$ / $11.39^\dagger$          & $100.0\%$ & $46.7\%$  & $53.3\%$ & $2.14\times$ \\
\midrule
\multicolumn{7}{l}{\textit{Edge \texttt{LipCache} + cloud backend}} \\
(4) & \texttt{GuardNet} $\rightarrow$ \texttt{MainNet}  & $0.282$M / $1.08$ & $65.2\%$ & $65.2\%$ & $34.8\%$ & $1.53\times$ \\
(5) & \texttt{GuardNet} $\rightarrow$ BranchyNet        & $0.282$M / $1.08$ & $65.2\%$ & $28.2\%$ & $71.8\%$ & $3.54\times$ \\
(6) & \texttt{GuardNet} $\rightarrow$ MSDNet            & $0.282$M / $1.08$ & $65.2\%$ & $30.5\%$ & $69.5\%$ & $3.28\times$ \\
\bottomrule
\multicolumn{7}{l}{\footnotesize $^\dagger$: footprint if the cloud-resident network were deployed at the edge (shown for size comparison only).} \\
\end{tabular}
\end{table*}

The accounting exposes the complementarity. Cloud BranchyNet and MSDNet reduce cloud computation to $43.3\%$ and $46.7\%$, respectively, but leave uploads at $100\%$. \texttt{LipCache} first hits at the edge for $34.8\%$ of requests, reducing both uploads and cloud invocations to $65.2\%$. In composition, cloud computation further falls to $28.2\%$ or $30.5\%$, corresponding to $3.54\times$ or $3.28\times$ cloud-capacity gain. The $H=0.348$ value is the controlled CIFAR-10 operating point of this section and differs from the $H=0.3295$ real-device path in Section~\ref{subsec:stress_and_deployment}; the two are not mixed. Early-exit networks are not restricted to cloud deployment; the table describes their resource effect when used as the cloud backend. Certification still covers only local \texttt{GuardNet} proxy-label consistency when a cached label is returned, not the early-exit network, \texttt{MainNet}, or ground-truth correctness. Combined end-to-end latency, accuracy, edge energy, and live network traffic require measurement on a unified path.

\subsection{Certified Radius and Empirical Thresholds}
\label{subsec:pareto}

Figure~\ref{fig:pareto_frontier} compares three kinds of reuse boundaries: the tight certified radius, an empirical radius based on inter-class distance statistics, and a single global nearest-neighbor threshold. The three datasets exhibit a consistent pattern: the certified radius does not pursue the highest hit rate, but it is the only boundary definition that can stably keep the certified-consistency rate at $100\%$ across all three tasks.

Taking CIFAR-10 as an example, the hit rate, accuracy, and consistency corresponding to the tight certified radius are $0.344$, $0.922$, and $100\%$, respectively; the inter-class-distance empirical radius slightly raises the hit rate to $0.352$, but the certified-consistency rate drops to $37.9\%$; the global nearest-neighbor threshold pushes the hit rate up to $0.494$, yet it drives the end-to-end accuracy down to $0.777$. On Tiny-10, the empirical thresholds are likewise more aggressive, but their corresponding certified-consistency rates are only $33.3\%$ and $23.9\%$; on SVHN, the empirical thresholds do not even achieve a higher hit rate, yet they still depress the certified-consistency rate to $45.7\%$ and $25.4\%$. The empirical thresholds either trade accuracy and consistency for a higher hit rate, or damage the certification condition without notably improving the hit rate; they do not form a better accuracy--hit-rate frontier across the three tasks.

This result defines the role of the certified radius in the main text: it does not pursue the most aggressive empirical hit rate, but provides a reproducible, interpretable, and cross-task consistent reuse boundary.

\begin{figure*}[!htbp]
\centering
\includegraphics[width=0.98\linewidth]{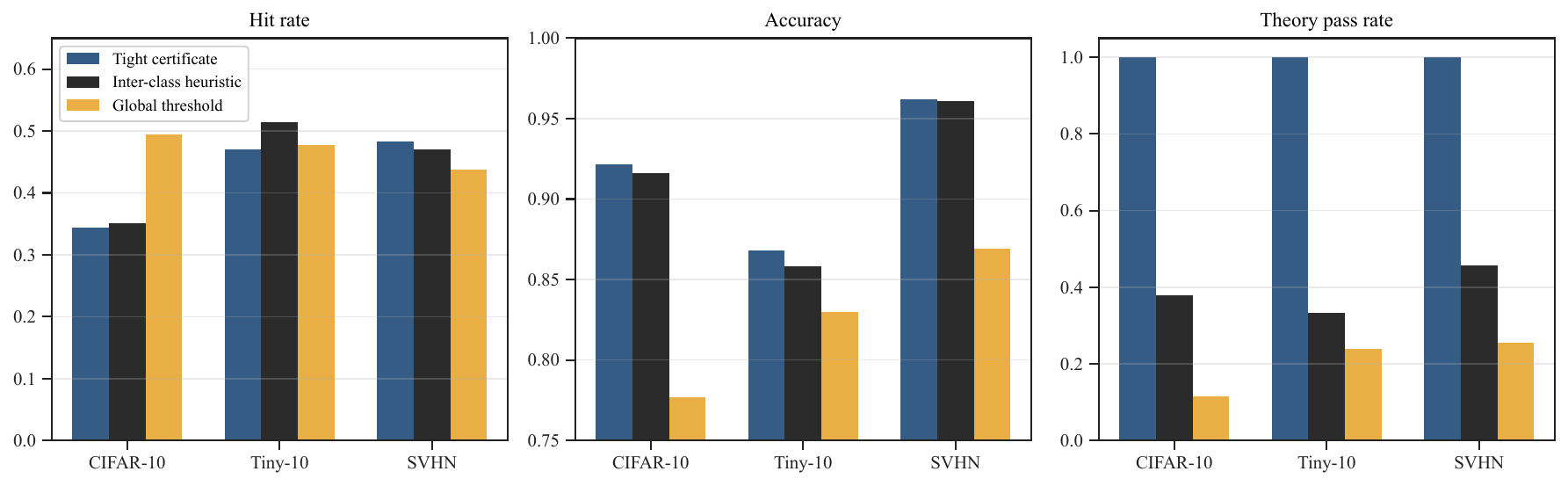}
\caption{Comparison of the certified radius and empirical thresholds. Empirical thresholds sometimes yield a higher hit rate, but only the tight certified radius keeps the certified-consistency rate at $100\%$ across all three 10-class tasks while maintaining high end-to-end accuracy.}
\label{fig:pareto_frontier}
\end{figure*}

\subsection{Default Operating Point Configuration}
\label{subsec:default_operating_point}

Figures~\ref{fig:hit_speedup_relation} and \ref{fig:dim_margin_heatmap} answer why the main-text default configuration adopts the operating point of ``tight certified radius + radius-prioritized de-duplication + proxy label + 64-dimensional features + 200 samples per class.''

On the entry-selection strategy, radius-prioritized de-duplication gives the highest hit rate on CIFAR-10 and also outperforms random sampling and K-center on Tiny-10; on SVHN, the hit rate of K-means is slightly higher than that of radius-prioritized de-duplication, but the gap is small. Considering the consistency and interpretability across the three tasks, radius-prioritized de-duplication constitutes a more robust unified default strategy, which also indicates that the certified-caching scenario requires not only geometric coverage but also prioritized retention of high-quality center samples with large certified radii.

    Regarding return-value semantics, the proxy label is directly aligned with the theoretical guarantee and provides the clearest semantics. On CIFAR-10, the end-to-end accuracies of the proxy-label, \texttt{MainNet}-label, and ground-truth schemes are $0.9215$, $0.9134$, and $0.9134$, respectively; on Tiny-10, they are $0.8680$, $0.8680$, and $0.8700$, respectively; and on SVHN, they are $0.9617$, $0.9241$, and $0.9237$, respectively. These results show that the proxy label achieves higher end-to-end accuracy on CIFAR-10 and SVHN, whereas the differences among the three semantics are small on Tiny-10. In all cases, the proxy label remains fully aligned with the certification condition. The main text adopts the proxy label as the default precisely because it has the clearest theoretical semantics under a unified protocol.

    Cache capacity remains an effective lever. On all three datasets, increasing the cache from $100$ to $400$ samples per class raises the hit rate: CIFAR-10 from $0.288$ to $0.392$, Tiny-10 from $0.446$ to $0.480$, and SVHN from $0.434$ to $0.531$. Capacity growth, however, does not automatically bring better accuracy or lower latency; the final operating point should therefore be selected jointly with the radius-tightening strategy discussed below.

The optimal feature dimensionality is dataset-dependent. On CIFAR-10, the hit rates for $d=32/64/128$ are $0.348/0.344/0.336$, respectively, and $32$ dimensions are already near-optimal; on Tiny-10, the corresponding hit rates are $0.456/0.470/0.452$, with little difference between $32$ and $64$ dimensions; on SVHN, the hit rate continues to rise from $0.498/0.484$ to $0.527$, so higher dimensions still bring gains. The main text retains $64$ dimensions as the unified default in order to share the same protocol across the three tasks; the optimal dimensionality of different datasets is not fully consistent, and this is precisely the degree of freedom that needs to be re-tuned per task in subsequent deployment.

\begin{figure*}[!htbp]
\centering
\includegraphics[width=0.98\linewidth]{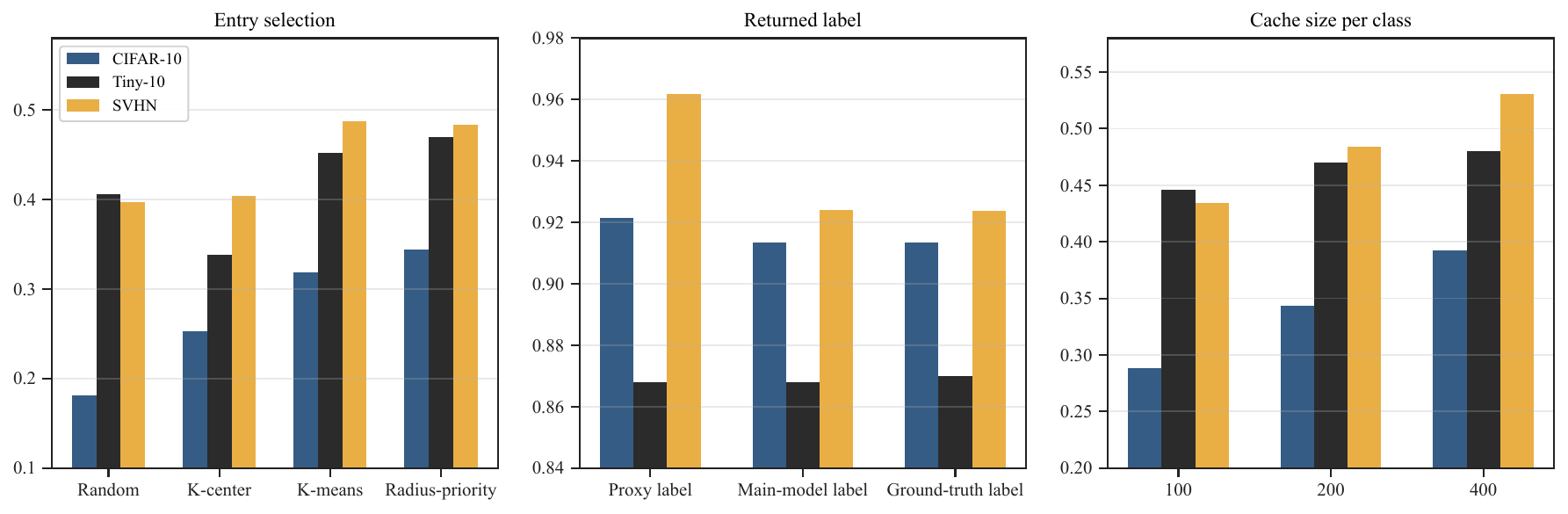}
\caption{Default-operating-point ablation on the three 10-class tasks. Left: effect of entry-selection strategy on the hit rate; middle: effect of return-value semantics on end-to-end accuracy; right: effect of per-class cache capacity on the hit rate.}
\label{fig:hit_speedup_relation}
\end{figure*}

\begin{figure*}[!htbp]
\centering
\includegraphics[width=0.9\linewidth]{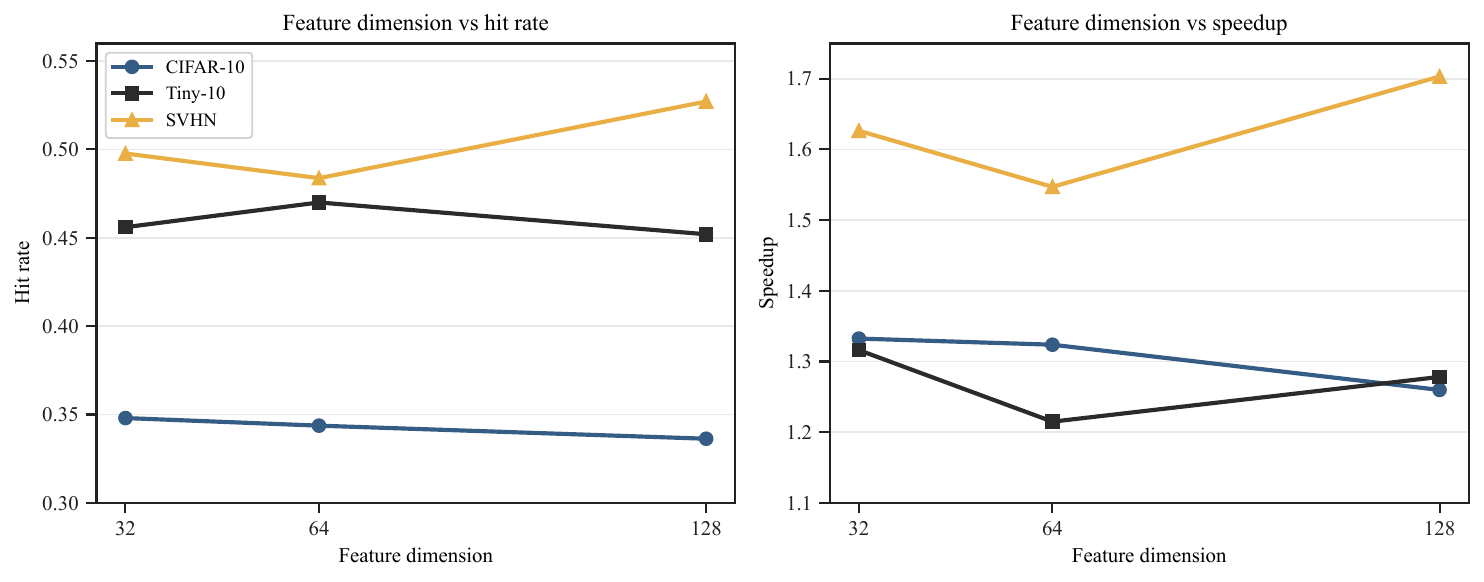}
\caption{Effect of feature dimensionality on the hit rate and the measured speedup. CIFAR-10 and Tiny-10 are already near their respective optima at smaller dimensions, whereas SVHN continues to improve at higher dimensions, indicating that the unified default dimensionality is not fully identical to each dataset's optimum.}
\label{fig:dim_margin_heatmap}
\end{figure*}

\subsection{Hit Mechanism and Radius Tightening}
\label{subsec:hit_mechanism}

After determining the default operating point, two phenomena still need to be explained: why incorrect hits occur mainly near the boundary, and why tightening the radius can improve accuracy without breaking the certification condition. Figure~\ref{fig:mechanism_shrink} provides the answer.

On all three tasks, the normalized distance ratio $\|z_q-z_i\|_2/r_i$ of correct hits is consistently and markedly lower than that of incorrect hits. The mean distance ratios of correct hits on CIFAR-10, Tiny-10, and SVHN are $0.802$, $0.752$, and $0.767$, respectively, whereas those of incorrect hits rise to $0.891$, $0.910$, and $0.908$, systematically closer to the certification boundary. Meanwhile, the cache-center classification margin of correct hits is significantly larger: $2.43$, $3.77$, and $2.38$ on the three datasets, versus only $1.63$, $1.97$, and $1.49$ for incorrect hits. This indicates that high-risk hits do not appear at random, but are concentrated in the region ``closer to the boundary and with a smaller classification margin.''

The lower panels of Figure~\ref{fig:mechanism_shrink} show the radius-tightening phenomenon directly. When the radius is progressively tightened on top of the conservative certified radius, all three datasets exhibit a decrease in hit rate and an increase in accuracy, while the certified-consistency rate remains at $100\%$ throughout. For example, on CIFAR-10 the hit rate drops from $0.344$ to $0.200$ and the accuracy rises from $0.922$ to $0.940$; on SVHN the hit rate drops from $0.460$ to $0.312$ and the accuracy rises from $0.964$ to $0.970$. Radius tightening does not change whether the certification holds; rather, by removing high-risk hits closer to the boundary, it pushes the system toward a more conservative but higher-accuracy operating point.

\begin{figure*}[!htbp]
\centering
\includegraphics[width=0.98\linewidth]{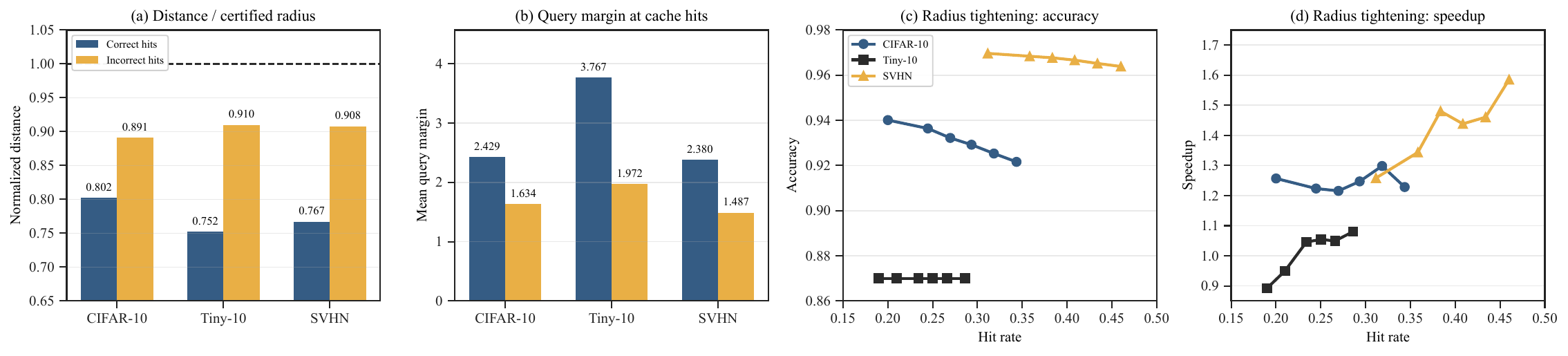}
\caption{Hit mechanism and radius tightening. Top: incorrect hits are closer to the certified boundary $\|z_q-z_i\|_2=r_i$ and have a smaller cache-hit margin. Bottom: progressively tightening the certified radius lowers the hit rate while raising accuracy and preserving theoretical self-consistency on all three 10-class tasks.}
\label{fig:mechanism_shrink}
\end{figure*}

\subsection{Training Objective and Multi-Class Extension}
\label{subsec:training_and_multiclass}

The above results show that the main degrees of freedom of the caching system lie not in whether to certify, but in how to change the hit-rate--accuracy operating point through training and radius selection. We therefore examine, respectively, the training-objective ablation on CIFAR-10 and the multi-class extension experiment on Tiny-ImageNet.

    Figure~\ref{fig:radius_mode} gives the training-objective ablation results on CIFAR-10. Cross-entropy supervision alone provides the most robust default operating point among the current 10-class main experiments, with a hit rate and accuracy of $0.358$ and $0.920$, respectively. The margin-enlargement regularizer mildly raises the hit rate: under a stronger margin-regularization setting, the hit rate rises to $0.401$ and the accuracy drops to $0.909$. The center constraint that compresses the intra-class distribution is more aggressive: the hit rate can be further raised to $0.544$, $0.596$, and even $0.644$, with the corresponding accuracy dropping to $0.894$, $0.888$, and $0.877$. It is worth emphasizing that all training-objective variants pass the certified-consistency check---the training objective changes the operating point, not the theoretical validity itself.

    Figure~\ref{fig:tiny_recipe_multiclass} answers a more critical question: after a stronger \texttt{GuardNet} training recipe, can certified caching scale from 10 classes to larger class counts? On the cross-entropy-only multi-class baseline, the hit rates under the tight certified radius at $20/30/50$ classes are only $0.056$, $0.0047$, and $0.0004$, respectively; after adopting the enhanced training recipe, these three values rise to $0.423\pm0.009$, $0.254\pm0.004$, and $0.124\pm0.001$, respectively, while the end-to-end accuracy remains at $0.838$, $0.847$, and $0.836$, and the certified-consistency rate stays at $100\%$. This indicates that the main bottleneck in the multi-class scenario lies not in the caching mechanism itself, but in whether \texttt{GuardNet} can learn a sufficiently compact and separable certified feature geometry.

    Nevertheless, the growth in class count remains the dominant factor determining the upper bound of reusability. Even with the enhanced training recipe, the hit rate keeps declining as the class count increases. Increasing the feature dimensionality cannot fundamentally reverse this trend: under the $20/30/50$-class settings, $d=32$ achieves the highest hit rates of $0.431$, $0.266$, and $0.158$, respectively, all outperforming $d=64$ under the same settings. The message conveyed by the multi-class extension is not ``one can abandon certification when the class count grows,'' but rather ``improving \texttt{GuardNet} training can markedly enlarge the usable class scale while preserving the certification condition, yet the class count still determines the upper bound of certified reuse.''

\begin{figure}[!htbp]
\centering
\includegraphics[width=0.9\columnwidth]{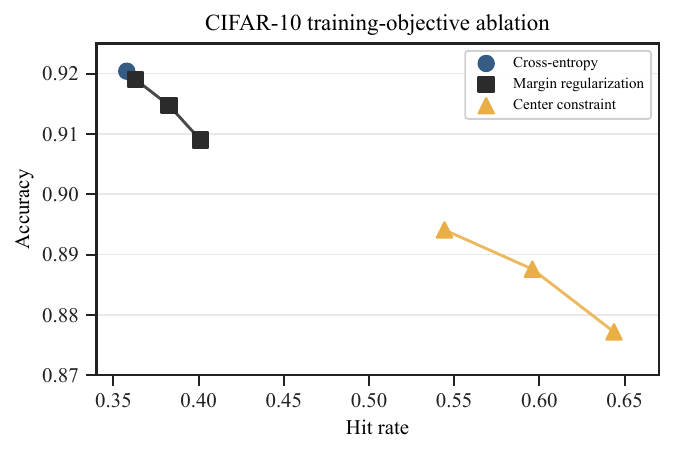}
\caption{Training-objective ablation on CIFAR-10. Cross-entropy supervision gives the most robust default operating point; margin regularization can mildly raise the hit rate; the center constraint can further raise the hit rate, but at a more noticeable accuracy cost.}
\label{fig:radius_mode}
\end{figure}

\begin{figure*}[!htbp]
\centering
\includegraphics[width=0.98\linewidth]{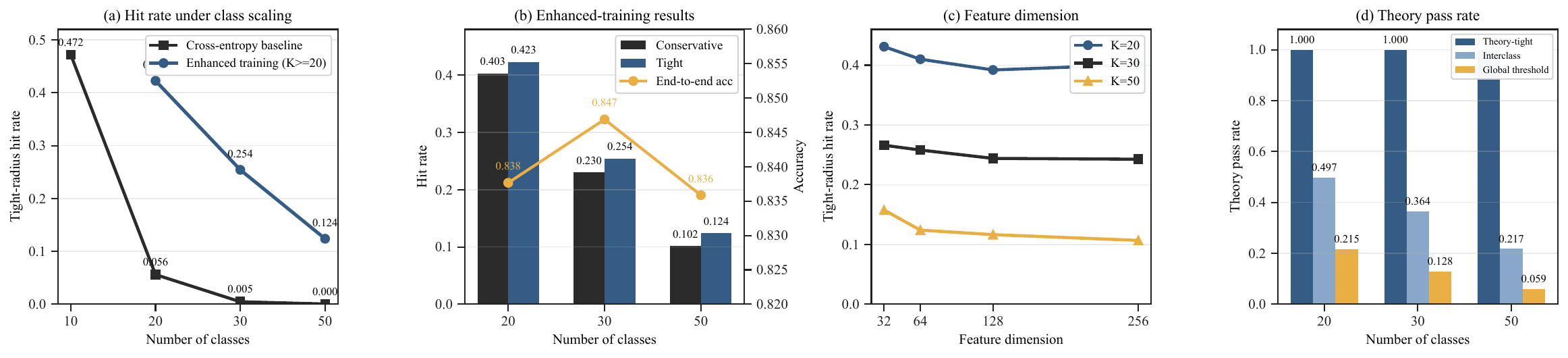}
\caption{Tiny-ImageNet multi-class extension experiment. Top left: comparison of the tight-certified-radius hit rates of the cross-entropy-only baseline and the enhanced training recipe at different class counts; top right: conservative versus tight certified-radius hit rates of the enhanced training recipe at $20/30/50$ classes and the corresponding end-to-end accuracy; bottom left: relationship between feature dimensionality and hit rate at different class scales; bottom right: certification consistency of different radius definitions in the multi-class scenario.}
\label{fig:tiny_recipe_multiclass}
\end{figure*}

\subsection{Query Perturbation and Deployment Boundaries}
\label{subsec:stress_and_deployment}

Figure~\ref{fig:knn_baseline} gives the empirical stress-test results under query perturbation. This group of experiments should be understood as an empirical generalization assessment rather than a direct extrapolation of the theoretical certificate. All three datasets exhibit the same pattern: additive noise is the most stable, JPEG compression brings a mild decline, and cropping and scaling are the most damaging to the hit rate. For example, on CIFAR-10 the baseline hit rate under the tight certified radius is $0.343$; it drops to $0.310$ under strong JPEG perturbation, and drops to $0.093$ and $0.152$ under strong cropping and strong scaling, respectively. Tiny-10 and SVHN preserve the same trend.

    More importantly, the performance degradation under perturbation comes mainly from ``not hitting'' rather than ``hitting but being wrong.'' In all perturbation experiments, the certified-consistency rate remains at $100\%$, indicating that certified consistency itself is not broken; the performance change is mainly because the perturbed query points more easily escape the original certified ball, rather than because local consistency inside the certified ball has failed.

    It should be emphasized that the above perturbation tests examine the robustness of the certified radius to single-point local perturbations and remain within the i.i.d. evaluation framework. Temporal redundancy (e.g., inter-frame coherence in video) and concept drift in streaming scenarios involve sequential correlations and require dedicated streaming benchmarks together with an analysis of how the hit rate evolves over time; these are left as future work. The results in this section can serve as an upper-bound reference for streaming deployment: if the i.i.d. hit rate is $H_0$, then under strong temporal correlation the probability that consecutive frames fall inside the same certified ball is typically no lower than $H_0$.

    We further deploy the baked CIFAR-10 \texttt{GuardNet} on an Atlas 200I DK A2 equipped with a DaVinci 310B4 NPU. The edge board executes \texttt{GuardNet} and cache lookup, while the remote \texttt{MainNet} latency is measured with ResNet50~\cite{he2016resnet} on an RTX 4070 Ti. The device evaluation uses $2{,}000$ queries and a $2{,}000$-entry cache, obtaining a hit rate of $0.3295$ and an end-to-end accuracy of $0.9170$.

    Figure~\ref{fig:edge_measurements} reports the real-device latency components and the RTT-based split-inference estimate. The NPU takes $1.34$\,ms/query for \texttt{GuardNet} encoding and the cKDTree lookup takes $0.22$\,ms/query, giving an edge-local path of $1.56$\,ms/query. Cloud batch-1 \texttt{MainNet} inference takes $8.38$\,ms/query. Combining these measured components with LAN, WiFi, and cellular RTTs of $2/10/40$\,ms gives expected speedups of $1.22\times$, $1.32\times$, and $1.42\times$, respectively.

    Figure~\ref{fig:edge_boundary} exposes two limits that are hidden by the batch-1 result. First, the cloud GPU reduces its per-image time from $8.38$\,ms at batch 1 to $0.207$\,ms at batch 128, whereas the edge NPU encoding time follows a shallow U-shape ($1.34/0.81/0.87/1.18$\,ms). Second, a discrete-event simulation driven by these measured per-operation times shows a saturation boundary: at $1{,}000$ queries/s, the edge reaches approximately $99\%$ utilization and \texttt{LipCache} p99 latency rises to $177.4$\,ms, compared with $27.8$\,ms for all-cloud inference. Thus, edge caching is advantageous when RTT is material and the edge retains spare capacity, but it can move the bottleneck from the cloud to the edge under extreme load.

    The measurement scope is deliberately separated in the figures. NPU encoding, cache lookup, and cloud inference are hardware measurements; the network speedups combine those measurements with modeled RTT rather than a live HTTP round trip; and the load experiment is a discrete-event simulation parameterized by measured operation times rather than a real traffic trace.

\begin{figure*}[!htbp]
\centering
\includegraphics[width=\linewidth]{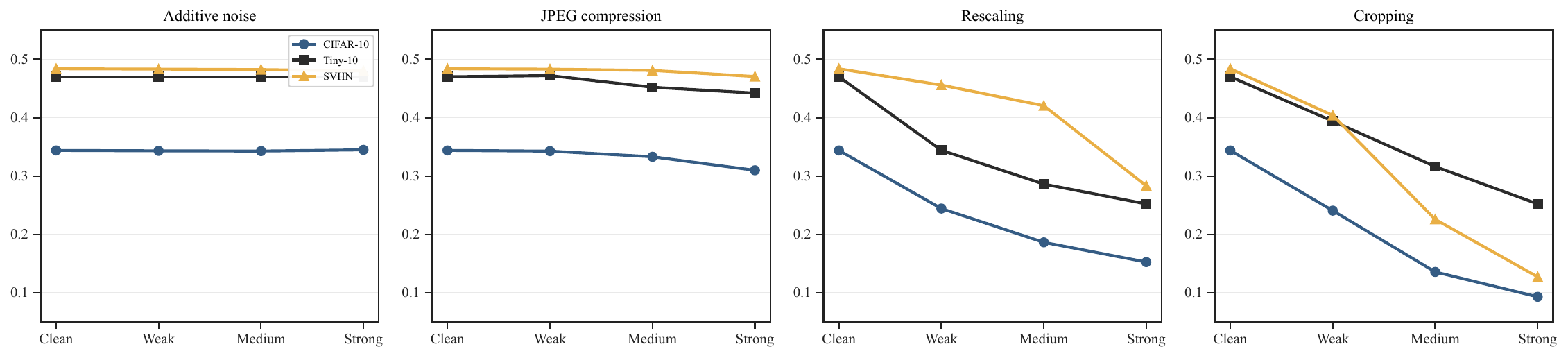}
\caption{Empirical stress test under query perturbation. Noise perturbation barely affects the certified hit rate of the three 10-class tasks; JPEG compression brings a mild decline; cropping and scaling more easily push query points out of the certified ball, so the hit-rate decline is more pronounced.}
\label{fig:knn_baseline}
\end{figure*}

\begin{figure}[!htbp]
\centering
\includegraphics[width=0.9\columnwidth]{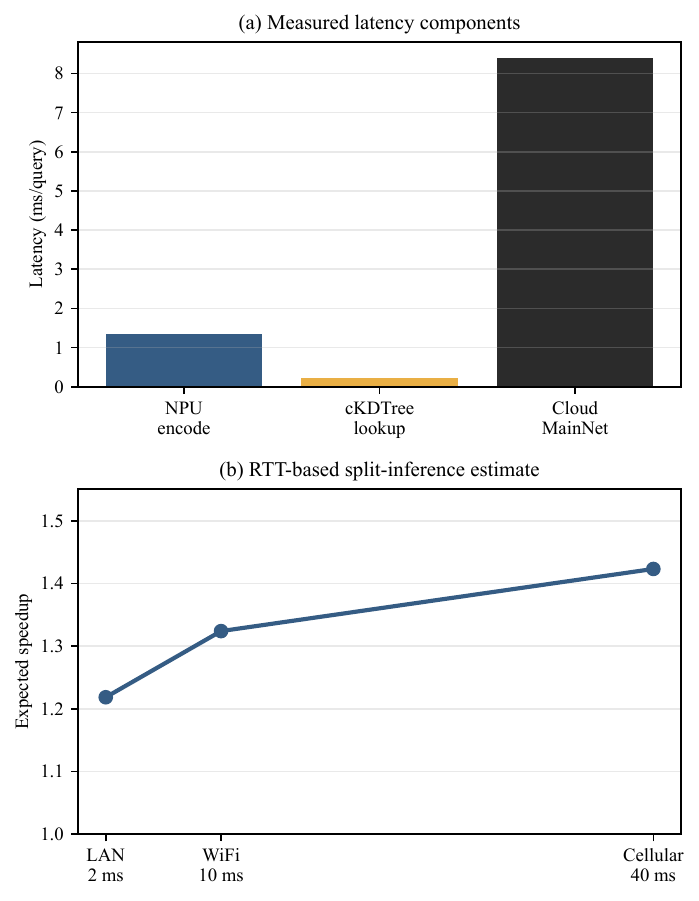}
\caption{Real-device measurements for the CIFAR-10 operating point. Left: measured latency components on the DaVinci 310B4 edge NPU and RTX 4070 Ti cloud GPU. Right: expected speedup obtained by combining the measured components with modeled LAN, WiFi, and cellular RTTs. The right panel is an analytical estimate, not a measured HTTP round trip.}
\label{fig:edge_measurements}
\end{figure}

\begin{figure}[!htbp]
\centering
\includegraphics[width=0.9\columnwidth]{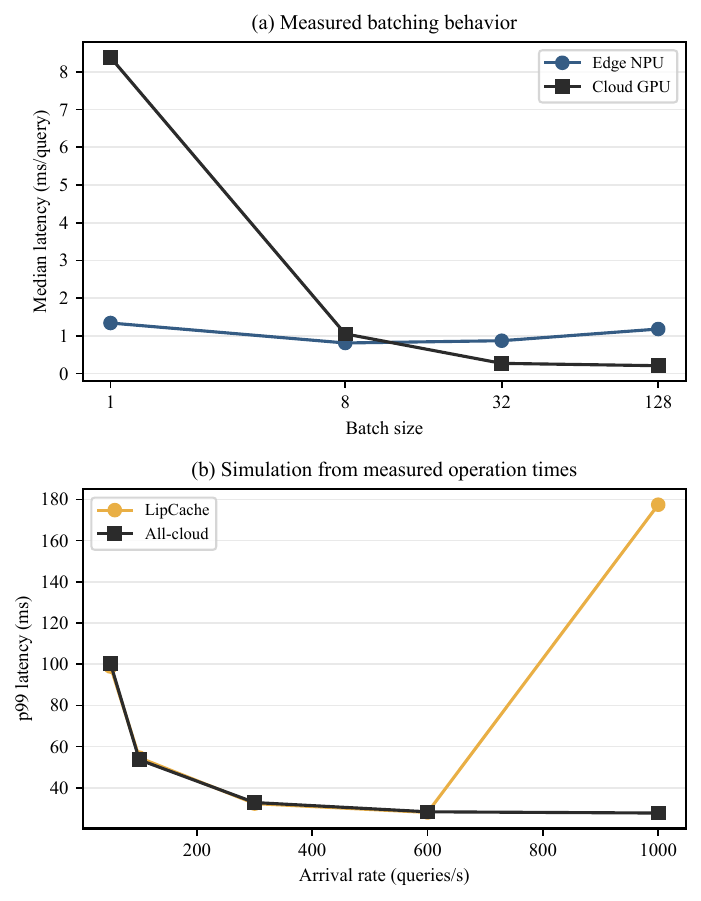}
\caption{Operating boundary of the edge deployment. Left: measured per-query latency under different batch sizes, showing the much stronger batching benefit of the cloud GPU. Right: p99 latency from a discrete-event simulation parameterized by measured operation times; the sharp rise near $1{,}000$ queries/s marks edge saturation.}
\label{fig:edge_boundary}
\end{figure}

\subsection{Retrieval Complexity Analysis}
\label{subsec:retrieval_scaling}

Figure~\ref{fig:complexity_analysis} isolates the variable retrieval term in Eq.~(16). We fix $C_G$ and $\mathrm{InvokeRate}_{M}C_M$ at the final CIFAR-10 seed-42 operating point ($d=64$, $N=2{,}000$, hit rate $0.3437$, hence $\mathrm{InvokeRate}_{M}=0.6563$). The left panel fixes $N=2{,}000$ and varies $d$; the right fixes $d=64$ and varies $N$. The unchanged additive terms are normalized to one work unit, so both panels directly show the linear contribution of $Nd$. This is an analytical complexity analysis of the online decision rule, not a hardware-latency measurement.

\begin{figure}[!htbp]
\centering
\includegraphics[width=0.9\columnwidth]{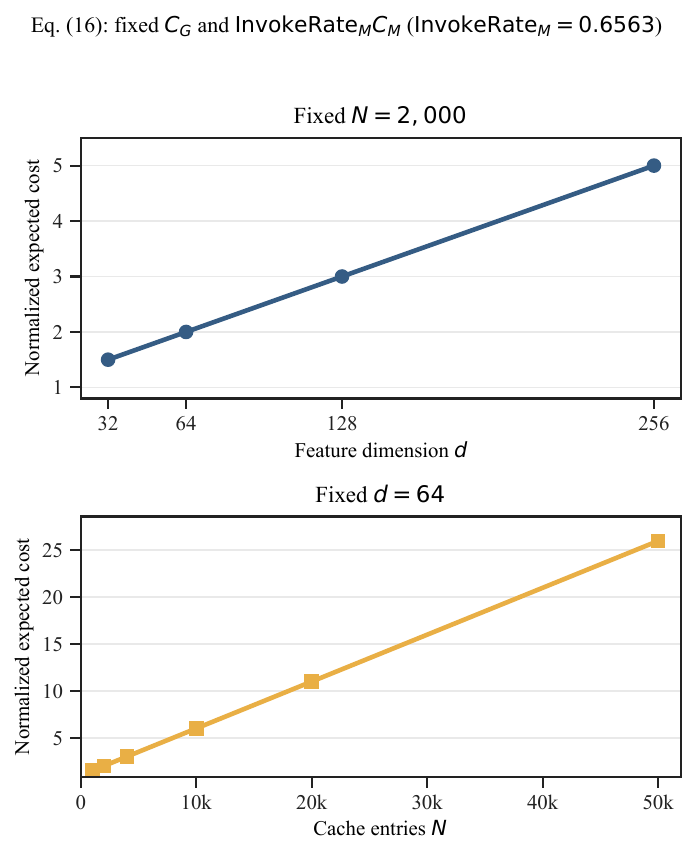}
\caption{Complexity analysis corresponding to Eq.~(16). $C_G$ and $\mathrm{InvokeRate}_{M}C_M$ are fixed at the final CIFAR-10 seed-42 operating point; the unchanged additive terms are normalized to one work unit. Left: with $N=2{,}000$ fixed, the expected cost grows linearly with $d$. Right: with $d=64$ fixed, it grows linearly with $N$.}
\label{fig:complexity_analysis}
\end{figure}

\subsection{Section Conclusions}

    The experiments in this section support the following conclusions. First, \texttt{LipCache} can stably achieve certified cache reuse on the three 10-class tasks of CIFAR-10, Tiny-10, and SVHN, and the main results consistently pass the certified-consistency check. Second, although the certified radius is more conservative than empirical thresholds, it is the only reuse boundary that can stably preserve certified consistency across tasks. Third, cache strategy, return-value semantics, feature dimensionality, and radius tightening mainly determine the operating point among hit rate, accuracy, and latency, rather than whether certification holds. Fourth, the training-objective ablation on CIFAR-10 and the multi-class extension on Tiny-ImageNet jointly show that improving \texttt{GuardNet} training can markedly change the operating point and even enlarge the usable class scale, yet the growth in class count remains the core factor limiting the upper bound of certified reuse. Finally, the perturbation, deployment, and complexity analyses show that the applicability domain of \texttt{LipCache} is jointly determined by the local feature geometry, cache scale, the edge encoding cost, and the network RTT, providing clear boundary conditions for subsequent system-level extensions.

    \section{Conclusion and Future Work}
    \label{sec:conclusion}

    This paper proposed \texttt{LipCache}, a certified semantic caching framework for resource-constrained edge image classification. \texttt{LipCache} retains the high-accuracy \texttt{MainNet} as a fallback predictor and introduces a lightweight, Lipschitz-controlled \texttt{GuardNet} that makes cache-hit decisions in a compact feature space. By deriving a per-sample certified reuse radius from the local \texttt{GuardNet} margin and the Lipschitz bound of the classification head, the framework turns cache lookup into an interpretable local-consistency check, rather than an exact-match or global heuristic-threshold decision.

    The experimental results show that, on three 10-class tasks with markedly different statistical structure, \texttt{LipCache} consistently exhibits the following pattern of results: ``non-zero certified hits, limited accuracy loss, measurable latency gains, and zero theoretical violations.'' The comparison between the certified radius and empirical thresholds further shows that the per-sample safety criterion does not pursue the most aggressive hit rate, but provides the only reuse boundary that can stably preserve certified consistency across tasks. Entry-selection strategy, return-value semantics, feature dimensionality, and radius tightening mainly affect the operating point among hit rate, accuracy, and latency, rather than whether certification holds. The multi-class extension experiment further reveals that the main limitation of the current method lies not in the cache criterion itself, but in whether \texttt{GuardNet} can learn a sufficiently compact and separable certified feature geometry. The edge-deployment analysis further indicates that the system gain of \texttt{LipCache} is jointly determined by the hit rate, the guard-side encoding cost, and the network RTT, and that it is best suited to deployment scenarios where ``cloud invocations are expensive while the edge side can afford a lightweight proxy.''

    In summary, \texttt{LipCache} offers a feasible path toward reliable cache-assisted inference at the edge, supported by theoretical analysis and experiments across multiple tasks. The remainder of this section discusses a number of open questions and extension directions along this path.

    \subsection{Future Work}
    \label{subsec:future_work}

    Before outlining future work, we first clarify the scope and boundaries of the current study, so that subsequent research can advance in a targeted manner.

    \textbf{Task scale and type.} The experiments in this paper validate the viability of the certification mechanism on three 10-class image classification tasks, designed to isolate the sensitivity of the certified radius to task geometry. Extension to larger-scale benchmarks such as full ImageNet (1000 classes) and CIFAR-100 is a natural next step.

    \textbf{Evaluation distribution.} All experiments are conducted on i.i.d. test sets, targeting a controlled validation of the certification mechanism itself. Temporal redundancy, concept drift, and consecutive-frame reuse rates in streaming deployment involve sequential correlations and require dedicated benchmarks together with a temporal analysis of the hit rate. The perturbation experiments in Section~\ref{subsec:stress_and_deployment} can serve as an upper-bound reference for streaming deployment, but a full temporal validation is left as future work.

    \textbf{Energy boundary.} The goal of this paper is cache-reuse efficiency and inference acceleration, rather than end-to-end energy reduction. Reducing the number of \texttt{MainNet} invocations mechanically reduces energy consumption, but a complete energy assessment would require hardware-in-the-loop measurement, which lies outside the methodological scope of this paper.

    Given these boundaries, although \texttt{LipCache} already achieves a good balance among accuracy, latency, and throughput, the current design still centers on cache-reuse efficiency and inference acceleration, rather than end-to-end energy reduction. For resource-constrained edge deployment, reducing the number of main-model invocations is only part of the problem. A natural next step is therefore to introduce a spiking neural network (SNN) variant of \texttt{GuardNet} on the guard side, so that the system can further reduce energy consumption while preserving discriminative performance and certified cache-reuse validity. This subsection first reports an already-implemented exploratory SNN prototype and its preliminary observations, and then, building on it, discusses broader open directions.

    \subsection{Exploration: An SNN-Based \texttt{GuardNet}}
    \label{subsec:snn_guardnet}

    The above discussion is not purely speculative. An SNN-based \texttt{GuardNet} variant has been implemented to evaluate its structural compatibility with the \texttt{LipCache} pipeline. The current design retains the same teacher--student training philosophy as the ANN \texttt{GuardNet}, but replaces static activations with multi-step spiking dynamics, so that the guard model may benefit from event-driven computation in future low-power deployments.

    Figure~\ref{fig:snn_guard_architecture}~summarizes the architecture of the current SNN-based \texttt{GuardNet}. Each stage follows a ``convolution--batch normalization--LIF neuron--average pooling'' pattern; higher-layer feature maps are further projected through linear feature layers and additional LIF units, then averaged over time and mapped back into a bounded feature space. The implementation explicitly exposes the number of time steps $T$, the membrane constant $\tau$, the input scaling, and an optional Poisson-style input encoding, making the model suitable for further analysis of temporal-encoding behavior.

    \begin{figure}[!htbp]
    \centering
    \includegraphics[width=0.98\linewidth]{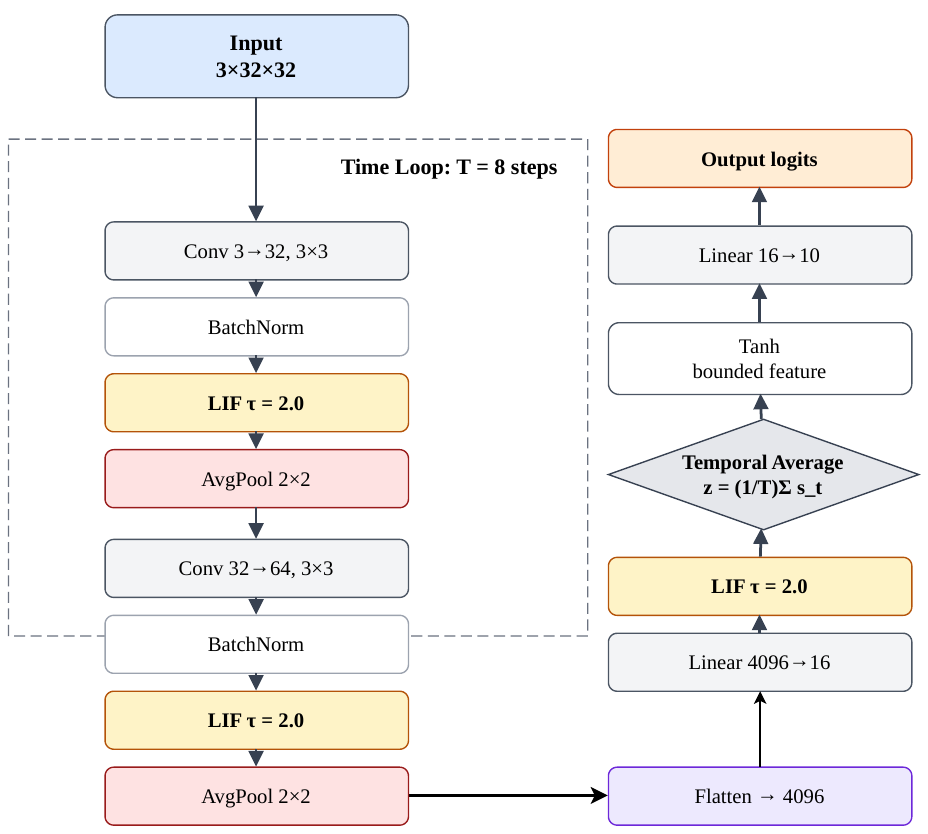}
    \caption{Architecture of the current SNN-based \texttt{GuardNet}. The model replaces the static activations of the original ANN \texttt{GuardNet} with multi-step spiking dynamics, while still producing compact features for \texttt{GuardNet}-side classification and future cache lookup.}
    \label{fig:snn_guard_architecture}
    \end{figure}

    Preliminary observations on CIFAR-10 show that the SNN-based \texttt{GuardNet} is trainable and inherits part of the semantic structure learned by the ANN teacher, but they also clearly indicate that the current version is not yet sufficient to replace the ANN guard in the main \texttt{LipCache} pipeline. Specifically, the SNN-based \texttt{GuardNet} reaches a classification accuracy of $74.26\%$ and an agreement rate of $74.78\%$ with \texttt{MainNet}, whereas under the same feature dimensionality the ANN \texttt{GuardNet} reaches $76.84\%$ accuracy and $77.13\%$ agreement. In the current software implementation, the SNN-based \texttt{GuardNet} is also slower---the measured per-sample latency is $6.97$~ms versus $0.50$~ms for the ANN \texttt{GuardNet}, although the two checkpoints are comparable in size ($338.8$~KB vs.\ $355.9$~KB). Figure~\ref{fig:snn_future_comparison}~summarizes this preliminary comparison.

    \begin{figure*}[!htbp]
    \centering
    \includegraphics[width=0.98\linewidth,height=2in]{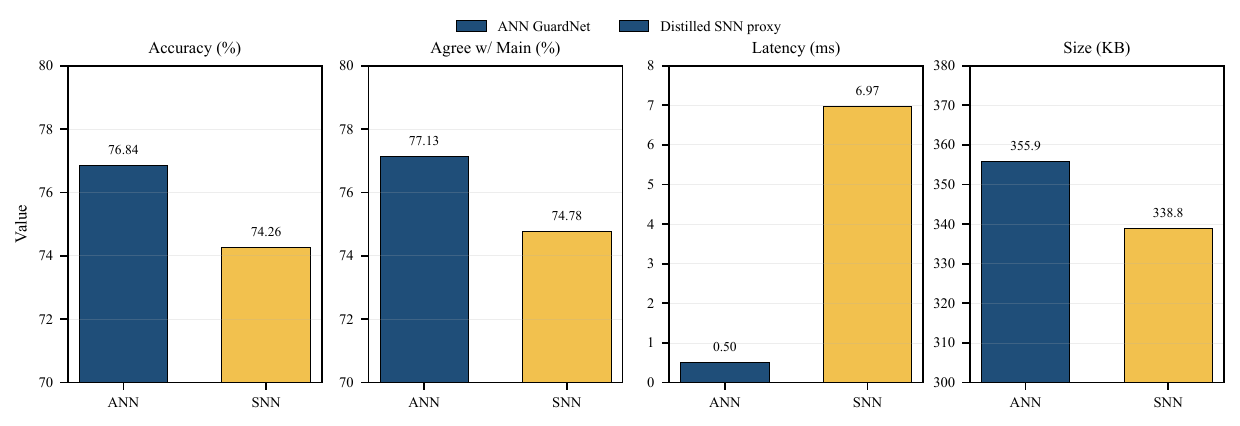}
    \caption{Preliminary comparison on CIFAR-10 between the ANN-based \texttt{GuardNet} and the currently distilled SNN-based \texttt{GuardNet}: (a) classification accuracy, (b) consistency with \texttt{MainNet}, (c) per-sample latency, (d) checkpoint size.}
    \label{fig:snn_future_comparison}
    \end{figure*}

    These results suggest a balanced reading. On the positive side, they indicate that the \texttt{GuardNet} learning framework of \texttt{LipCache} is compatible with more than one model family: the guard does not necessarily have to remain an ANN. At the same time, the current SNN study should be understood as exploratory evidence rather than a deployment-level conclusion---the SNN-based \texttt{GuardNet} has not yet reached the predictive quality required for reliable cache hits, and its measured latency was obtained on a conventional PyTorch/SpikingJelly software stack rather than on event-driven hardware. The current evidence therefore supports feasibility, not superiority.

    Based on these preliminary observations, we outline below a number of open questions in future work.

    First, energy-efficiency evaluation requires a unified and physically meaningful accounting rule. Directly comparing a heuristic spike-count estimate of an SNN with a FLOP-style approximation of an ANN is insufficient. Both models should be analyzed under an operation-level formulation---ANN inference is measured in multiply--accumulate operations, and event-driven SNN inference is measured in accumulate operations~\cite{lee2020enabling,horowitz2014computing}.

    Second, the SNN-based \texttt{GuardNet} itself needs significant improvement before it can be integrated as a reliable guard. Promising directions include stronger distillation objectives, better temporal-encoding schemes, explicit margin-aware training tailored to spiking features, and meaningful architectural constraints capable of recovering the Lipschitz control used by the ANN \texttt{GuardNet}.

    Third, the resulting energy and performance claims should be validated on neuromorphic hardware, including Loihi, TrueNorth, or SpiNNaker-class systems~\cite{davies2018loihi,merolla2014million,matsuo2024neuromorphic}. Such studies should go beyond coarse spike statistics and explicitly account for synaptic events, routing overhead, memory access, and board-level power measurement.

    Finally, an improved SNN-based \texttt{GuardNet} should be integrated into the complete \texttt{LipCache} inference pipeline, so that its impact on hit rate, fallback rate, end-to-end latency, and overall throughput can be evaluated at the full-system level rather than only at the standalone level.

    Taken together, these directions point to a broader research path: \texttt{LipCache} evolving from a cache-assisted acceleration framework into an energy-aware edge inference system. If \texttt{GuardNet} modeling, cache design, and hardware deployment can be jointly optimized, \texttt{LipCache} may ultimately become a unified solution that balances accuracy, latency, throughput, and energy.

    \section{Acknowledgment}
    Z. Xiang conceived the \texttt{LipCache} idea and developed the theoretical proofs. Y. Chen and F. Ying conducted the experiments, produced the figures, H. Zhao contributed to the framework design. Z. Xiang and Y. Chen proposed the SNN-based \texttt{GuardNet}, and Y. Chen carried out the preliminary experiments. B. Zhou and S. Dustdar played an important role in problem modeling and feasibility validation. The manuscript was polished with the assistance of generative-AI tools (GLM~5.2\footnote{https://www.bigmodel.cn} and DeepSeek~V4\footnote{https://chat.deepseek.cn}).

    \ifCLASSOPTIONcaptionsoff
      \newpage
    \fi

    \bibliographystyle{IEEEtran}
    \bibliography{ref}
    
\newpage

    \begin{IEEEbiography}[{\includegraphics[width=1in,height=1.25in,clip,keepaspectratio]{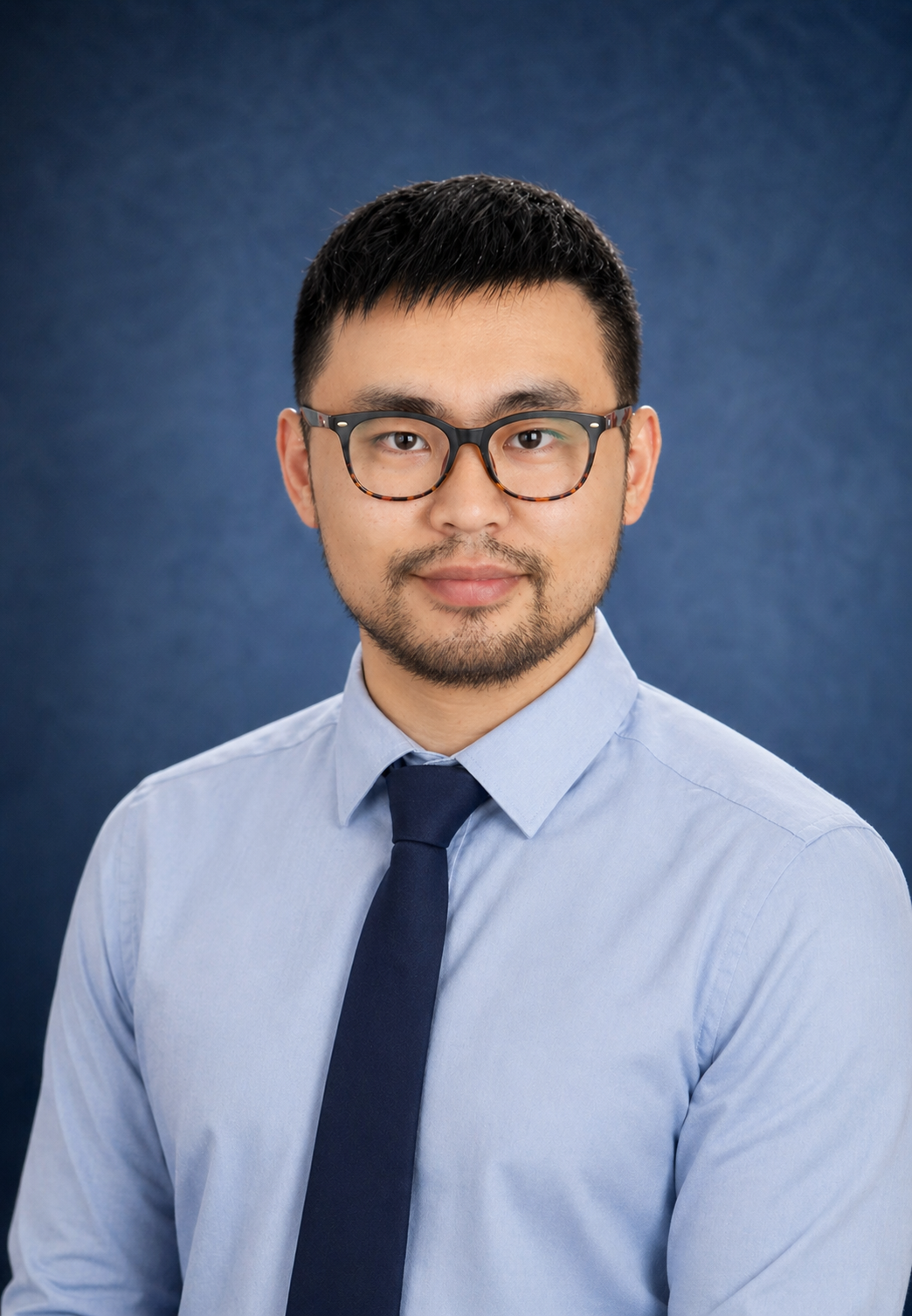}}]{Zhengzhe Xiang} received the B.S. and Ph.D. degrees in computer science and technology from Zhejiang University, Hangzhou, China. He was a Visiting Scholar with Shanghai Jiao Tong University, Shanghai, China, in 2022. He is currently an Associate Professor with Hangzhou City University, Hangzhou, China. His research interests focus on service computing and edge computing. He serves as a reviewer for several international journals, such as IEEE Transactions on Mobile Computing, IEEE Transactions on Services Computing, IET Communications, and Digital Communications and Networks. He also serves as a program committee member for many international conferences.
    \end{IEEEbiography}

    \begin{IEEEbiography}[{\includegraphics[width=1in,height=1.25in,clip,keepaspectratio]{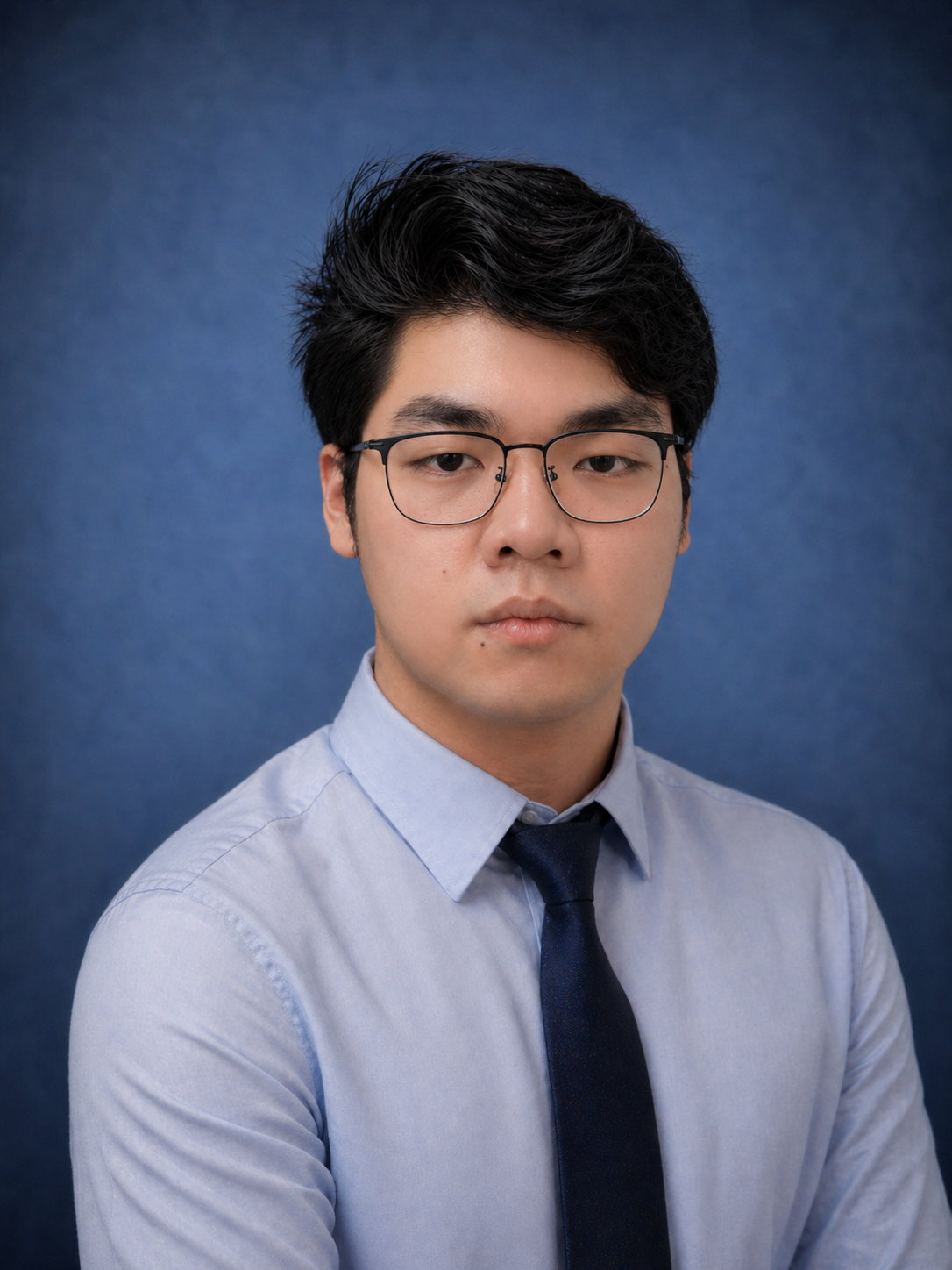}}]{Yinlin Chen} is currently pursuing the M.S. degree with the School of Computer and Computational Sciences, Hangzhou City University, Hangzhou, China. His research interests include edge computing, large language model applications, and distributed intelligent systems, with a focus on application optimization, task offloading, and intelligent resource scheduling for large language models in resource-constrained edge computing environments.
    \end{IEEEbiography}

    \begin{IEEEbiography}[{\includegraphics[width=1in,height=1.25in,clip,keepaspectratio]{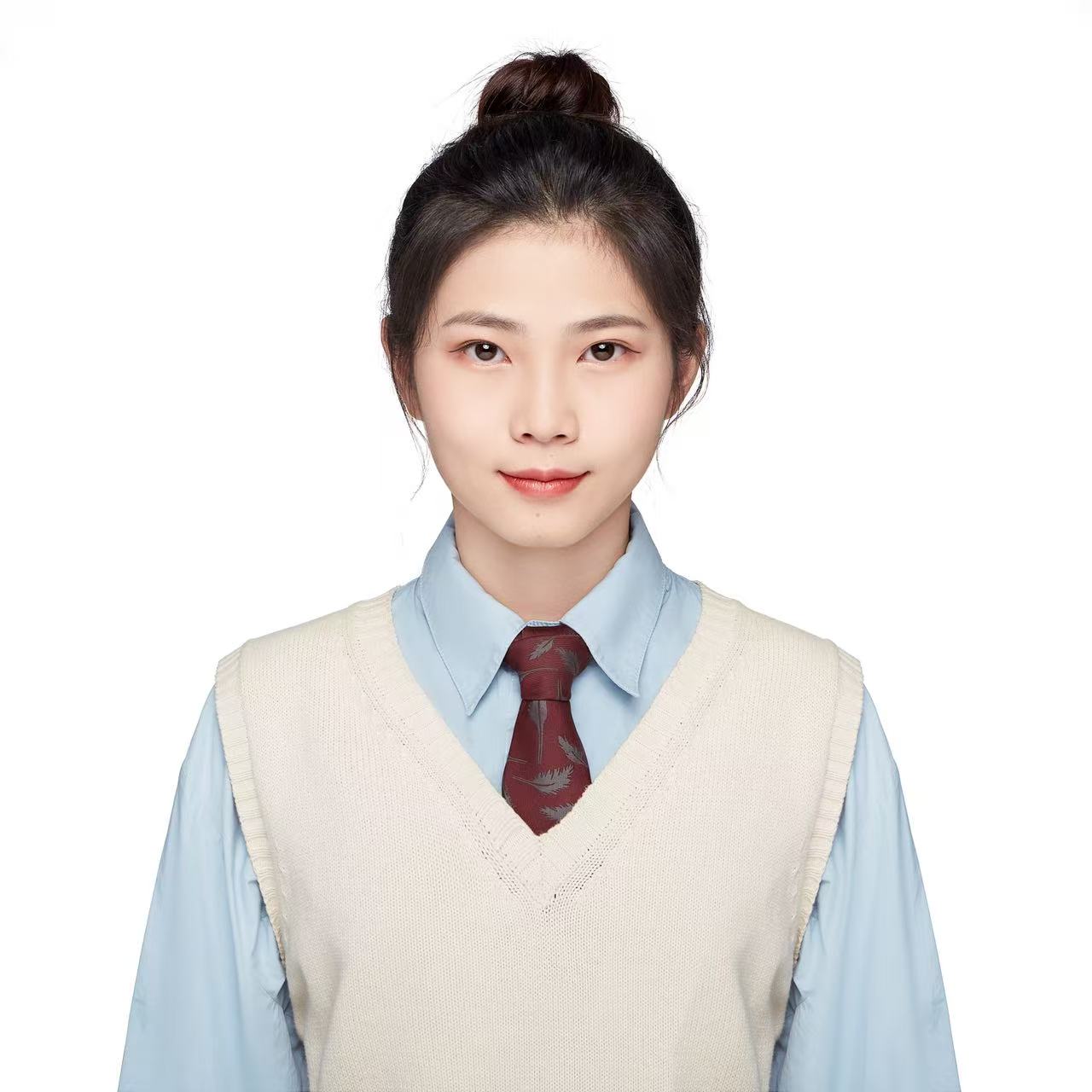}}]{Fuli Ying} is currently pursuing the M.S. degree with the School of Computer and Computational Sciences, Hangzhou City University, Hangzhou, China. Her research interests span mobile edge computing, deep-learning architectures, and image processing, with a focus on developing efficient algorithms for distributed computing environments and intelligent visual analytics systems.
    \end{IEEEbiography}

\begin{IEEEbiography}[{\includegraphics[width=1in,height=1.25in,clip,keepaspectratio]{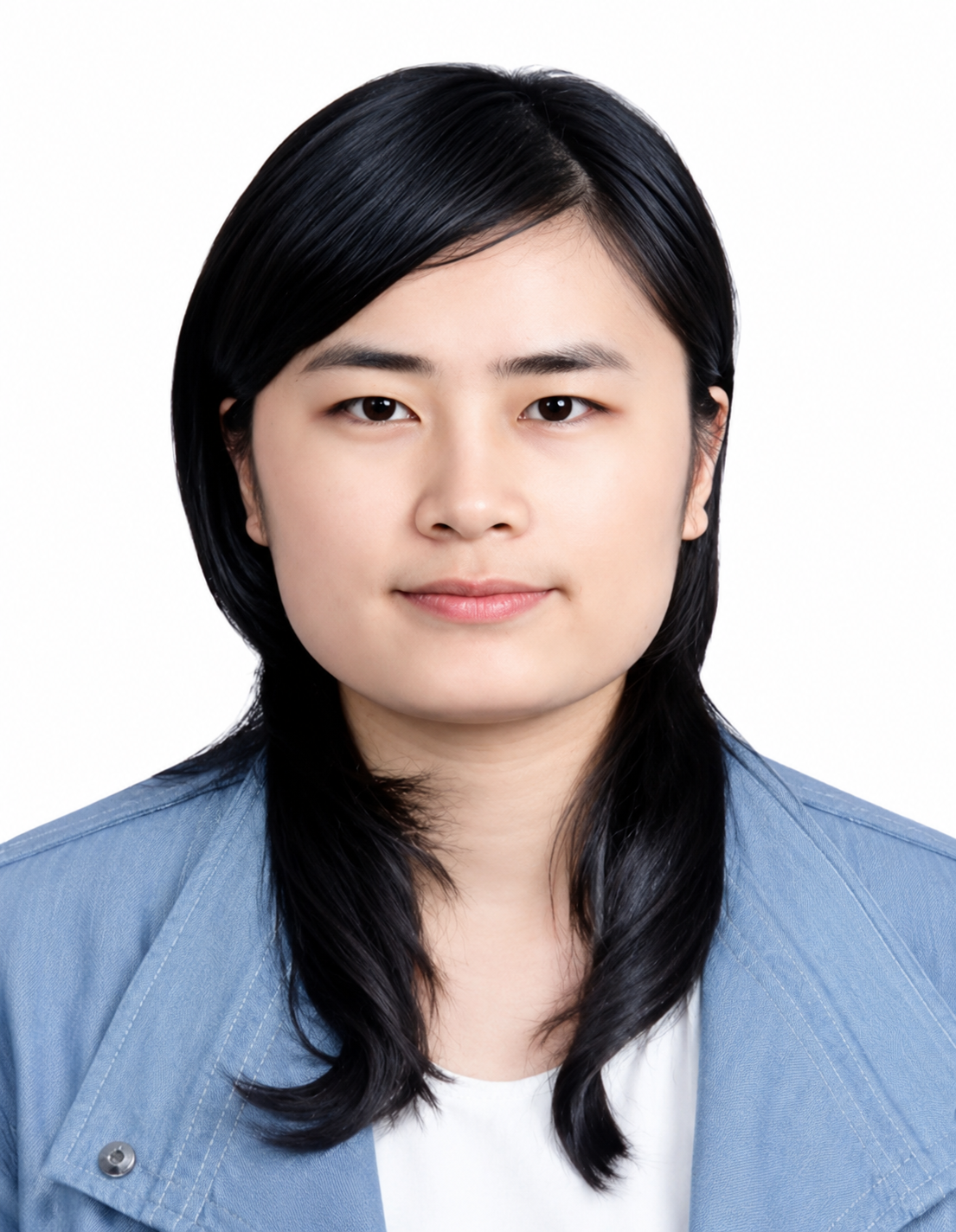}}]{Binbin Zhou} received the Ph.D. degree in computer science from Zhejiang University, Hangzhou, China, in 2021. She is currently an Associate Professor with the School of Computer Science and Computing, Hangzhou City University, Hangzhou, China. Her main research areas include spatiotemporal deep learning, multimodal fusion, artificial intelligence, and brain-inspired computing.
    \end{IEEEbiography}

    \begin{IEEEbiography}[{\includegraphics[width=1in,height=1.25in,clip,keepaspectratio]{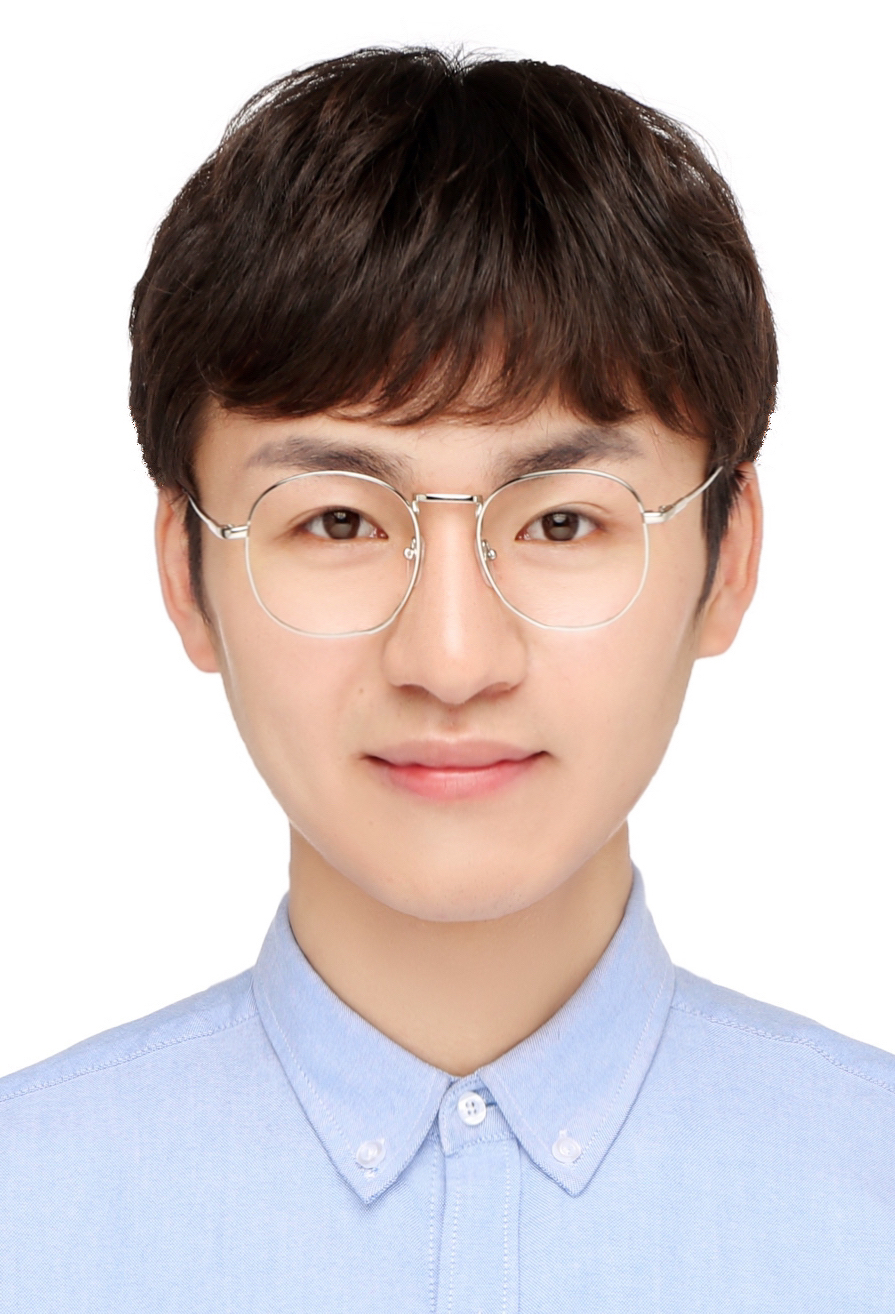}}]{Hailiang Zhao}(Member, IEEE) received the PhD degree in Computer Science and Technology in 2024. He is currently an assistant professor with the School of Software Technology, Zhejiang University, China. His research interests include distributed computing and services computing. He has published several papers in flagship conferences and journals such as IEEE ICWS, IEEE Transactions on Parallel and Distributed Systems, IEEE Transactions on Mobile Computing, IEEE Transactions on Services Computing, Proc. IEEE, etc. He was a recipient of the Best Student Paper Award of IEEE ICWS 2019.
    \end{IEEEbiography}

    \begin{IEEEbiography}[{\includegraphics[width=1in,height=1.25in,clip,keepaspectratio]{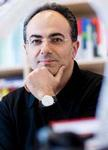}}]{Schahram Dustdar} is a Full Professor of Computer Science and heads the Distributed Systems Group at TU Wien, Vienna, Austria. He is an ICREA Research Professor at UPF Barcelona, Spain. His research interests include distributed systems, edge intelligence, and complex and autonomous software systems. He is the Editor-in-Chief of Computing; an Associate Editor of ACM Transactions on the Web, ACM Transactions on Internet Technology, IEEE Transactions on Cloud Computing, and IEEE Transactions on Services Computing. He also serves on the editorial boards of IEEE Internet Computing and IEEE Computer. He received the ACM Distinguished Scientist Award, the Distinguished Speaker Award, and the IBM Faculty Award. He is an elected member of Academia Europaea and served as the Chair of its Informatics Section from 2015 to 2022. He is an IEEE Fellow and an AAIA Fellow, and currently serves as the President of AAIA.
    \end{IEEEbiography}

    \end{document}